\DeclareSIUnit{\belmilliwatt}{Bm}
\DeclareSIUnit{\dBm}{\deci\belmilliwatt}
\def\BibTeX{{\rm B\kern-.05em{\sc i\kern-.025em b}\kern-.08em
		T\kern-.1667em\lower.7ex\hbox{E}\kern-.125emX}}
\newif\iftag@here
\newcommand*{\taghere}[1][0pt]% #1 = additional vertical offset (optional)
{\ifmeasuring@\else% do not expand until displayed
	\global\tag@heretrue
	\tikz[remember picture,overlay]{\coordinate (taghere) at (0pt,#1);}%
	\fi}
\def\place@tag{%
	\iftagsleft@
	\kern-\tagshift@
	\iftag@here
	\global\tag@herefalse
	\tikz[remember picture,overlay]%
	{\path (taghere) -| node[anchor=base]{\rlap{\boxz@}} (0pt,0pt);}%
	\else
	\if1\shift@tag\row@\relax
	\rlap{\vbox{%
			\normalbaselines
			\boxz@
			\vbox to\lineht@{}%
			\raise@tag
	}}%
	\else
	\rlap{\boxz@}%
	\fi
	\kern\displaywidth@
	\fi% end of \iftag@here
	\else
	\kern-\tagshift@
	\iftag@here
	\global\tag@herefalse
	\tikz[remember picture,overlay]%
	{\path  (taghere) -|  node[anchor=base]{\llap{\boxz@}} (0pt,0pt);}%
	\else
	\if1\shift@tag\row@\relax
	\llap{\vtop{%
			\raise@tag
			\normalbaselines
			\setbox\@ne\null
			\dp\@ne\lineht@
			\box\@ne
			\boxz@
	}}%
	\else \llap{\boxz@}%
	\fi
	\fi% end of \iftas@here
	\fi
}
\DeclareMathOperator*{\maximize}{maximize}
\DeclareMathOperator*{\argmin}{arg\,min}
\DeclareMathOperator*{\minimize}{minimize}
\DeclareMathOperator*{\subjectto}{subject\,to}
\newacronym{swipt}{SWIPT}{simultaneous wireless information and power transfer}
\newacronym{wpt}{WPT}{wireless power transfer}
\newacronym{wpcn}{WPCN}{wireless powered communication network}
\newacronym{wit}{WIT}{wireless information transfer}
\newacronym{awgn}{AWGN}{additive white Gaussian noise}
\newacronym{tx}{TX}{transmitter}
\newacronym{bs}{BS}{base station}
\newacronym{ir}{IR}{information receiver}
\newacronym{eh}{EH}{energy harvesting}
\newacronym{irs}{IRS}{intelligent reflected surface}
\newacronym{ap}{AP}{average power}
\newacronym{pp}{PP}{peak power}
\newacronym{siso}{SISO}{single-input single-output}
\newacronym{mimo}{MIMO}{multiple-input multiple-output}
\newacronym{miso}{MISO}{multiple-input single-output}
\newacronym{simo}{SIMO}{single-input multiple-output}
\newacronym{rf}{RF}{radio frequency}
\newacronym{dc}{DC}{direct current}
\newacronym{ac}{AC}{alternative current}
\newacronym{papr}{PAPR}{peak-to-average power ratio}
\newacronym{lp}{LPF}{low-pass filter}
\newacronym{mc}{MC}{matching circuit}
\newacronym{mrt}{MRT}{maximum ratio transmission}
\newacronym{ecb}{ECB}{equivalent complex baseband}
\newacronym{tdd}{TDD}{time-division-duplex}
\newacronym{zf}{ZF}{zero forcing}
\newacronym{snr}{SNR}{signal-to-noise ratio}
\newacronym{sinr}{SINR}{signal-to-interference-plus-noise ratio}
\newacronym{tdma}{TDMA}{time division multiple access}
\newacronym{rv}{RV}{random variable}
\newacronym{iid}{i.i.d.}{independent and identically distributed}
\newacronym{pdf}{pdf}{probability density function}
\newacronym{dnn}{DNN}{dense neural networks}
\newacronym{mdp}{MDP}{Markov decision process}
\newacronym{sca}{SCA}{successive convex approximation}
\newacronym{sdr}{SDR}{semi-definite relaxation}
\newacronym{spr}{LP}{low power}
\newacronym{mpr}{MP}{medium power}
\newacronym{lpr}{HP}{high power}
\DeclareMathOperator{\rank}{rank}
\DeclareMathOperator{\diag}{diag}
\newcommand{\norm}[1]{\left\lVert#1\right\rVert_2}
\newcommand{\Tr}[1]{\text{Tr}\{#1\} }
\begin{document}

	\newtheorem{proposition}{Proposition}	
	\newtheorem{lemma}{Lemma}	
	\newtheorem{corollary}{Corollary}
	\newtheorem{assumption}{Assumption}	
	\newtheorem{remark}{Remark}	
	
	\title{Optimal Energy Signal Design for Multi-user MISO WPCNs With Non-linear Energy Harvesting Circuits}

	\author{\IEEEauthorblockN{Nikita Shanin\IEEEauthorrefmark{1}, Amelie Hagelauer\IEEEauthorrefmark{2}\IEEEauthorrefmark{3}, Laura Cottatellucci\IEEEauthorrefmark{1}, and Robert Schober\IEEEauthorrefmark{1}}
		\IEEEauthorblockA{\IEEEauthorrefmark{1}\textit{Friedrich-Alexander-Universit\"{a}t Erlangen-N\"{u}rnberg (FAU), Germany}\\ 
		\IEEEauthorrefmark{2}\textit{Fraunhofer EMFT Einrichtung f\"{u}r Mikrosysteme und Festk\"{o}rper-Technologien}\\
		\IEEEauthorrefmark{3}\textit{Technische Universit\"{a}t M\"{u}nchen, Germany}}	}

	\maketitle
	\vspace*{-55pt}
	\begin{abstract}
		\vspace*{-5pt}
		\let\thefootnote\relax\footnotetext{This paper was presented in part at IEEE International Conference on Acoustics, Speech and Signal Processing (ICASSP), Singapore, 2022 \cite{Shanin2021d}. }
In this work, we study a multi-user wireless powered communication network (WPCN), where a multi-antenna base station (BS) sends an energy signal to multiple single-antenna users equipped with non-linear energy harvesting (EH) circuits.
The users, in turn, harvest energy from the received signal and utilize it for information transmission in the uplink.
Furthermore, to jointly optimize the energy signal waveform and downlink beamforming, we assume that the BS broadcasts a pulse-modulated signal employing multiple energy signal vectors. 
We formulate an optimization problem for the joint design of the downlink transmit energy signal vectors, their number, the durations of the transmit pulses, and the time allocation policy for minimization of the average transmit power at the BS.
We show that for single-user WPCNs, a single energy signal vector, which is collinear with the maximum ratio transmission (MRT) vector and drives the EH circuit at the user device into saturation, is optimal.
Next, for the general multi-user case, we show that the optimal signal design requires a maximum number of energy signal vectors that exceeds the number of users by one and propose an algorithm to obtain the optimal energy signal vectors.
Since the complexity of the optimal design is high, we also propose two suboptimal schemes for WPCN design.
First, for asymptotic massive WPCNs, where the ratio of the number of users to the number of BS antennas, i.e., the system load, tends to zero, we show that the optimal downlink transmit signal can be obtained in closed-form and comprises a sequence of weighted sums of MRT vectors. 
Next, based on this result, for general WPCNs with finite system loads, we propose a suboptimal closed-form MRT-based design and a suboptimal semidefinite relaxation (SDR)-based scheme.
Our simulation results reveal that the proposed optimal scheme and suboptimal SDR-based design achieve nearly identical performance and outperform two baseline schemes, which are based on linear and sigmoidal EH models.
Furthermore, we show that, if the system load of the WPCN is low, the performance gap between the proposed suboptimal solutions is small and becomes negligible as the number of BS antennas tends to infinity. 

	\end{abstract}

\section{Introduction}
The growth of the number of low-power Internet-of-Things (IoT) devices has fuelled significant interest in wireless powered communication networks (WPCNs) that enable energy-sustainable communication \cite{Bi2015, Clerckx2019, Ju2014, Liu2014, Yang2015, Lee_2016, Morsi2018a, Gu2022, Wei2022, Tietze2012, Shanin2020, Boshkovska2015, Boshkovska2017a, Boshkovska2018, Hua2022, Zeng2021, Li2022, Clerckx2015, Clerckx2018, Clerckx2017, Zawawi2019, Shanin2021a, Morsi2019, Shanin2021d}.
A typical multi-user \gls*{miso} WPCN comprises a multi-antenna \gls*{bs} that broadcasts a \gls*{rf} signal to multiple single-antenna users in the downlink \cite{Bi2015}. 
Each user device is equipped with an electrical circuit for {harvesting} the received \gls*{rf} power and storing it in a load device (e.g., a battery).
Then, the users employ the harvested power for information transmission in the uplink \cite{Clerckx2019}.

For the design of WPCNs, a linear relationship between the received and harvested powers at the user devices is assumed in \cite{Ju2014, Liu2014, Yang2015, Lee_2016, Morsi2018a}.
In \cite{Ju2014}, for multi-user \gls*{siso} WPCNs, the authors formulate a sum-throughput maximization problem and unveil a {\itshape double-near-far} phenomenon resulting in an uneven rate allocation among the users.
To ensure reliable uplink communication, the authors in \cite{Ju2014} also considered a minimum-throughput maximization problem and showed that, due to the distance-dependent signal attenuation, the \gls*{bs} has to allocate significantly more resources in the downlink to far users compared to near users.
Next, in \cite{Liu2014}, the authors study multi-user WPCNs, where the \gls*{bs} employs multiple antennas and utilizes \gls*{zf} equalization to suppress the inter-user interference in the uplink.
For this setup, the authors propose an algorithm for downlink energy beamforming that maximizes the minimum information rate in the uplink.
The authors in \cite{Yang2015} study a massive \gls*{mimo} WPCN and show that the optimal energy beamforming in the downlink is a linear combination of the normalized channel vectors between the \gls*{bs} and the user devices.
In \cite{Lee_2016}, the authors consider multi-user WPCNs, where \gls*{tdma} is adopted for information transmission in the uplink and users are equipped with infinite or finite capacity energy storage.
It is shown that the optimal transmit policy at the BS is to radiate all available power as fast as possible, and then, remain silent until the end of the scheduling time frame. 
The energy storage at the user devices of a WPCN is further considered in \cite{Morsi2018a}.
In this paper, the authors characterize the internal state of the energy buffer via a Markov chain and, for Rayleigh fading channels, determine the limiting distribution of the stored power in closed-form.

Although the results in \cite{Ju2014, Liu2014, Yang2015, Lee_2016, Morsi2018a} provide notable insights for WPCN design, they are based on restrictive impractical assumptions.
In fact, practical \gls*{eh} circuits employ non-linear diodes for signal rectification and thus, the dependence of the harvested power on the input power is highly non-linear in both the low and high input power regimes \cite{Gu2022, Wei2022}.
If the received \gls*{rf} power at a user device is low, the non-linearity is caused by both the non-linear current-voltage (I-V) characteristic of the rectifying diode \cite{Tietze2012} and the power-dependent impedance mismatch between the receive antenna and the non-linear rectifier circuit \cite{Shanin2020}.
For high input powers, the non-linear diode is driven into breakdown, which causes saturation of the harvested power at the user device \cite{Clerckx2019}.
To take these non-linear effects of the EH circuits into account, the authors in \cite{Boshkovska2015} propose a sigmoid function-based \gls*{eh} model, whose parameters are obtained via curve fitting to match simulation data.
This model is widely employed for the design of WPCNs, see, e.g., \cite{Boshkovska2017a, Boshkovska2018, Hua2022, Zeng2021, Li2022}.
In particular, the authors in \cite{Boshkovska2017a} study multi-user \gls*{mimo} \gls*{wpcn}s, where, similarly to \cite{Lee_2016}, a \gls*{tdma} scheme is employed for information transfer in the uplink.
In \cite{Boshkovska2018}, the authors consider a wireless communication system, where a wirelessly charged multi-antenna \gls*{bs} transmits information to multiple single-antenna receivers in the presence of an eavesdropper.
Furthermore, the authors in \cite{Hua2022, Zeng2021, Li2022} study WPCNs assisted by intelligent reflecting surfaces, which are employed to increase the harvested powers at the user devices, and thus, improve the overall uplink throughput.
For the considered system setups, the authors in \cite{Boshkovska2017a, Boshkovska2018, Hua2022, Zeng2021, Li2022} formulate non-convex resource allocation optimization problems for system design and propose iterative algorithms to solve them.

The analysis in \cite{Boshkovska2017a, Boshkovska2018, Hua2022, Zeng2021, Li2022} represents a significant progress over the results in \cite{Ju2014, Liu2014, Yang2015, Lee_2016, Morsi2018a}.
However, the model introduced in \cite{Boshkovska2015} and adopted in \cite{Boshkovska2017a, Boshkovska2018, Hua2022, Zeng2021, Li2022} characterizes the {\itshape average} harvested power at the user devices as a function of the {\itshape average} received RF power, and therefore, cannot fully capture the non-linearity of EH circuits.
Interestingly, the authors in \cite{Clerckx2015} propose an EH model that maps the {\itshape instantaneous} received RF power to the {\itshape instantaneous} harvested power. 
Based on this EH model, the author in \cite{Clerckx2018} considers single-user wireless power transfer systems and shows that the power harvested at the user is maximized if \gls*{mrt} is adopted at the multi-antenna \gls*{bs} in the downlink.
Furthermore, the EH model in \cite{Clerckx2015} is utilized in \cite{Clerckx2017} and \cite{Zawawi2019} for the analysis of backscatter communication systems which are also based on wireless power transfer.
In particular, the authors in \cite{Clerckx2017} highlight a trade-off between the harvested power at the user device and the \gls*{snr} of the communication link between user and information receiver.
In \cite{Zawawi2019}, the authors extend the communication system in \cite{Clerckx2017} to the multi-user case and formulate an optimization problem to analyze the trade-off between the weighted sum of harvested powers at the user devices and the signal-to-interference-plus-noise ratios (SINRs) of the communication links between the users and the receiver.

In contrast to the linear and sigmoidal models in \cite{Ju2014, Liu2014, Yang2015, Lee_2016, Morsi2018a} and \cite{Boshkovska2015, Boshkovska2017a, Boshkovska2018, Hua2022, Zeng2021, Li2022}, respectively, the EH model in \cite{Clerckx2015} characterizes the instantaneous behaviour of the EH circuits and thus, allows the optimization of the transmit signal waveform for wireless power transfer.
However, since this model is based on the Taylor series approximation of the current flow through the rectifying diode and does not take into account the breakdown effect of the diode, it still does not fully capture all non-linear effects of practical EH circuits for both low and high input powers.
To overcome this limitation, the authors in \cite{Morsi2019} propose an EH model based on a precise analysis of a standard EH circuit with half-wave rectifier and show that On-Off transmission is optimal for the maximization of the average harvested power in single-antenna wireless power transfer systems.
Next, since the model in \cite{Morsi2019} is valid for half-wave rectifying EH circuits only, the authors in \cite{Shanin2020} propose a learning-based approach to model a wide class of EH circuits and show that the behaviour of an EH circuit depends on the number of rectifying diodes, the characteristics of the matching circuit, and other circuit parameters.
The authors in \cite{Shanin2021a} study a multi-user \gls*{mimo} wireless power transfer system with general non-linear EH circuits.
They design the optimal energy signal to be transmitted by the BS to maximize a weighted sum of the average harvested powers at the EH nodes and show that it comprises multiple beamforming vectors.
Finally, in \cite{Shanin2021d}, which is the conference version of this paper, we adopt the EH model from \cite{Morsi2019} and develop an iterative algorithm for the design of two-user \gls*{miso} WPCNs.
However, to the best of the authors' knowledge, the optimal design of the downlink energy signal waveform and the resource allocation for multi-user MISO WPCNs taking into account all non-linear effects of \gls*{eh} circuits is still an open problem, which is tackled in this paper.

In this paper, we determine the optimal downlink energy signal design and resource allocation policy for multi-user MISO WPCNs, where the single-antenna user devices are equipped with non-linear \gls*{eh} circuits, to minimize the average transmit power at the BS.
In contrast to the WPCN designs in \cite{Ju2014, Liu2014, Yang2015, Lee_2016, Morsi2018a, Boshkovska2015, Boshkovska2017a, Boshkovska2018, Hua2022, Zeng2021, Li2022, Clerckx2015}, in this work, to account for the non-linearities of the EH circuits in both the low and high input power regimes, we adopt a general non-linear EH model that characterizes the instantaneous harvested power at the user devices.
Furthermore, in contrast to the existing works on multi-user WPCNs with multi-antenna BSs, to jointly optimize beamforming and the waveform of the downlink energy signal, we assume that the BS broadcasts a pulse-modulated energy signal employing multiple transmit energy signal vectors. 
The user devices, in turn, harvest the power from the received \gls*{rf} energy signal and utilize it for information transmission in the uplink, where we adopt \gls*{zf} equalization at the \gls*{bs} to suppress inter-user interference.
The main contributions of this paper can be summarized as follows:
\begin{itemize}
	\item We formulate an optimization problem for the joint design of the normalized durations of the downlink and uplink transmission subframes and the number, durations, values, and powers of the downlink transmit energy signal vectors for minimization of the average transmit power at the \gls*{bs} under per-user rate constraints in the uplink.
	\item First, as a special case, we consider a single-user WPCN and show that the optimal energy signal in the downlink employs a single vector, which is collinear with the \gls*{mrt} vector and whose norm is chosen such that it drives the EH circuit at the user device into saturation. 
	Then, for the general multi-user case, we show that the maximum number of transmit energy signal vectors for the optimal signal design in the downlink exceeds the number of users by one.
	To obtain these energy signal vectors, we propose an optimal algorithm whose computational complexity is exponential in the number of user devices and polynomial in the number of antennas equipped at the \gls*{bs}.
	\item Since the optimal solution for the general multi-user case entails a high computational complexity if the number of users is large, we also propose two low-complexity suboptimal solutions.
	First, we show that for asymptotic massive MISO WPCNs, where the ratio of the users and number of BS antennas, i.e., the system load, tends to zero, the optimal downlink energy signal comprises a sequence of weighted sums of the \gls*{mrt} beamforming vectors and can be obtained in closed-form.
	Next, based on this result, we design an MRT-based scheme, which is optimal for massive MISO WPCNs with vanishing system loads and provides a suboptimal solution of the formulated problem for general MISO WPCNs with finite system loads.
	Furthermore, to improve the performance of the MRT-based design for general WPCNs, we derive a low-complexity suboptimal scheme, which is based on \gls*{sdr}.
	\item Our simulation results reveal that the proposed SDR-based suboptimal design has a significantly lower computational complexity than the optimal scheme for high system loads and both WPCN designs achieve nearly identical performance and significantly outperform two baseline schemes, which are based on the linear and sigmoidal EH models, respectively.
	Also, we show that for low system loads, the proposed suboptimal MRT- and SDR-based schemes have a similar performance and become identical when the number of BS antennas tends to infinity.
	Finally, we show that when the number of deployed BS antennas and user devices grow, the average transmit power in the downlink decreases and increases, respectively.
\end{itemize}

The remainder of this paper is organized as follows. 
In Section II, we discuss the proposed system model.
In Section III, we formulate the optimization problem for minimizing the average transmit power in the downlink under per-user rate constraints in the uplink, and solve it for the single-user case in closed-form.
Moreover, for the general multi-user case, we characterize the optimal solution and propose an algorithm to compute it.
In Section IV, we consider the asymptotic massive MISO regime and determine a closed-form solution for the resulting optimization problem.
Furthermore, for the general non-asymptotic case, we propose two low-complexity suboptimal schemes.
In Section V, we evaluate the performance of the proposed framework via numerical simulations.
Finally, in Section VI, we draw some conclusions.

\emph{Notation:} Bold upper case letters $\boldsymbol{X}$ represent matrices and ${X}_{i,j}$ denotes the element of $\boldsymbol{X}$ in row $i$ and column $j$. 
Bold lower case letters $\boldsymbol{x}$ stand for vectors and ${x}_{i}$ is the $i^\text{th}$ element of $\boldsymbol{x}$.
$\boldsymbol{X}^H$, $\Tr{\boldsymbol{X}}$, and $\rank \{\boldsymbol{X}\}$ denote the Hermitian, trace, and rank of matrix $\boldsymbol{X}$, respectively.
$\mathbb{E}\{x\}$ denotes the statistical expectation of $x$.
The real part of a complex number is denoted by $\Re\{ \cdot \}$.
The transpose and L2-norm of vector $\boldsymbol{x}$ are represented by $\boldsymbol{x}^\top$ and $\norm{\boldsymbol{x}}$, respectively.
The sets of real, real non-negative, and complex numbers are represented by $\mathbb{R}$, $\mathbb{R}_{+}$, and $\mathbb{C}$, respectively, whereas $\mathcal{S}^{N}_{+}$ stands for the set of complex positive semidefinite matrices of size $N$.
${\boldsymbol{X}}^{\frac{1}{2}}$ and $\norm{\boldsymbol{X}}$ stand for the square root and spectral norm of matrix $\boldsymbol{X}$, respectively, whereas $\diag(\boldsymbol{x})$ with $\boldsymbol{x} \in \mathbb{C}^N$ represents a diagonal matrix $\boldsymbol{X} \in \mathbb{C}^{N \times N}$, whose elements are all zero-valued except for ${X}_{n,n} = x_n, \forall n \in \{1,2,\cdots, N\}$.
The imaginary unit is denoted by $j$.
$f^{-1}(\cdot)$ and $f'(x_0)$ denote the inverse function of $f(\cdot)$ and the first-order derivative of $f(x)$ evaluated at point $x = x_0$, respectively.
Generalized non-strict inequalities are denoted by $\succeq$ and $\preceq$ and are associated with $\mathbb{R}^N_{+}$, i.e., $\forall N \geq 0$ and $\forall \boldsymbol{x}, \boldsymbol{y} \in \mathbb{R}^N$, $\boldsymbol{x} \succeq \boldsymbol{y}$ and $\boldsymbol{y} \preceq \boldsymbol{x} \Longleftrightarrow y_n \leq x_n, \forall n \in \{1,2,\cdots, N\}$. 
Furthermore, $\boldsymbol{0}$ and $\boldsymbol{I}_N$ stand for the all-zero vector of appropriate dimension and the identity matrix of size $N$, respectively.

\section{System Model}
\begin{figure}[t]
	\centering
	\includegraphics[draft=false, width=1\textwidth]{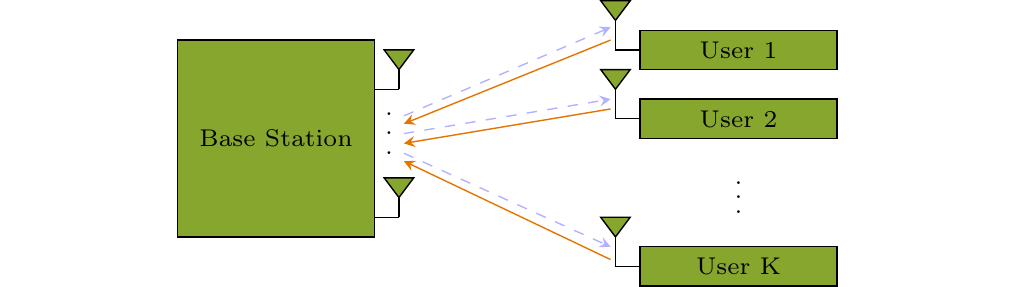}
	\caption{A WPCN with a multi-antenna BS and $K$ single-antenna users. In the downlink, the BS sends an energy signal to the users (blue dashed arrows). In turn, the users harvest the received energy and utilize it for information transmission in the uplink (orange solid arrows).}
	\label{Fig:SystemModel}
	\vspace*{-15pt}
\end{figure}
\begin{figure}[t]
	\centering
	\includegraphics[draft=false, width=1\textwidth]{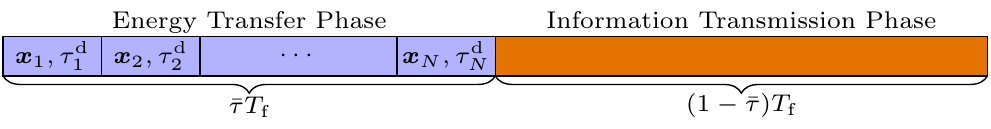}
	\caption{Structure of a time frame of length $T_\text{f}$.}
	\label{Fig:TimeSharing}
	\vspace*{-5pt}
\end{figure}
We consider a multi-user MISO \gls*{wpcn}, where $K \geq 1$ single-antenna users are equipped with \gls*{eh} circuits \cite{Morsi2019} and $N_\text{t}\geq K$ antennas are deployed at the \gls*{bs}, cf. Fig.~\ref{Fig:SystemModel}.
To enable EH at the user devices and information transmission in the uplink, we adopt \gls*{tdd} transmission and assume that each time frame of length $T_\text{f}$ is divided into two subframes. 
In particular, in the first subframe of length $\bar{\tau} T_\text{f}$ with $\bar{\tau} \in [0,1]$, the BS transmits an \gls*{rf} energy signal to transfer energy to the user devices, which harvest the received power.
In the subsequent subframe of length $(1-\bar{\tau}) T_\text{f}$, this harvested power is utilized for information transmission in the uplink.
We assume that the channels between the BS and the user devices are constant for the duration of a time frame.
The channel between the BS and user $k$ is denoted by row vector $\boldsymbol{h}_{k} \in \mathbb{C}^{1 \times N_\text{t}}$.
We assume that channel reciprocity holds and $\boldsymbol{h}_{k}$ is perfectly known\footnotemark\hspace*{0pt} at the BS and user $k$.

\footnotetext{In this work, we investigate the maximum achievable performance of WPCN systems. Therefore, as in, e.g., \cite{Clerckx2015, Clerckx2017, Zawawi2019, Morsi2019, Shanin2021a, Shanin2020}, we assume perfect channel knowledge at the \gls*{bs} and the user devices.}
\subsection{Energy Transfer Phase}
In the energy transfer phase, in contrast to the WPCN designs in \cite{Ju2014, Liu2014, Yang2015, Lee_2016, Morsi2018a, Boshkovska2015, Boshkovska2017a, Boshkovska2018, Hua2022, Zeng2021, Li2022, Clerckx2015, Clerckx2017, Zawawi2019}, to jointly optimize transmit beamforming at the BS and waveform of the energy signal, we assume that the BS transmits a pulse-modulated RF energy signal employing a sequence of $N$ energy signal vectors $\boldsymbol{x}_n \in \mathbb{C}^{N_\text{t}}$, cf. Fig.~\ref{Fig:TimeSharing}.
The \gls*{ecb} representation of this pulse-modulated RF energy signal is modelled as $\boldsymbol{x}(t) = \sum_{n=1}^{N} \boldsymbol{x}_n \psi_n(t) \in \mathbb{C}^{N_\text{t}}$, where $\psi_n(t) = \Pi\big(\frac{ t - \sum_{k=0}^{n-1}\tau^\text{d}_{k} T_\text{f} }{ \tau^\text{d}_{n} T_\text{f} } \big)$ is the transmit pulse, $\tau_{0}^\text{d} = 0$,  $\Pi(t)$ is a rectangular function that takes value $1$ if $t \in [0,1)$ and $0$, otherwise.
Here, $\tau^\text{d}_{n}, n\in\{1,2,\cdots, N\},$ is the portion of the time frame of length $T_\text{f}$ utilized for transmission of energy signal vector $\boldsymbol{x}_n$ with $\sum_{n=1}^{N} \tau^\text{d}_{n} = \bar{\tau}$.

The \gls*{rf} energy signal received at user $k$ in the energy transfer phase is given by 
\begin{equation}
z^{\text{RF}}_{k}(t) = \sqrt{2} \Re \Big\{ \boldsymbol{h}_k  \boldsymbol{x}(t) \exp(j 2 \pi f_c t)  \Big\},
\end{equation}
\noindent where $f_c$ denotes the carrier frequency.
Similar to \cite{Morsi2019}, we neglect the noise received at the user terminals since its contribution to the harvested power is negligible.

To harvest power, users are equipped with memoryless non-linear \gls*{eh} circuits \cite{Shanin2021a}.
In contrast to the WPCN designs in \cite{Ju2014, Liu2014, Yang2015, Lee_2016, Morsi2018a, Boshkovska2015, Boshkovska2017a, Boshkovska2018, Hua2022, Zeng2021, Li2022}, in this work, we characterize the instantaneous behaviour of an EH circuit and model it via the relationship between the instantaneous received and harvested powers.
Furthermore, to take all non-linear and saturation effects of EH circuits into account, we assume that the mapping between the instantaneous received and harvested powers at user $k$ is characterized by a function $\phi_k(\cdot)$, which is given by
\begin{equation}
	\phi_k(|z|^2) = \min \{\varphi_k(|z|^2), \varphi_k(A_k^2)\}.
	\label{Eqn:EhModel}
\end{equation}
\noindent Here, $z$ is the \gls*{ecb} representative of the received RF energy signal at the user device, $\varphi_k(\cdot), k\in\{1,2,\cdots, K\},$ is a convex monotonically increasing function, whose parameters are independent of the received signal and determined by the EH circuit, and $A_k$ is the minimum amplitude of the received signal that drives the EH circuit of user $k$ into saturation \cite{Morsi2019, Shanin2021a}.
We note that function $\varphi_k(\cdot)$ can be, e.g., linear \cite{Ju2014, Liu2014, Yang2015, Lee_2016, Morsi2018a} or derived for a given rectifier circuit as in \cite{Clerckx2015, Morsi2019, Shanin2020}.
For our numerical simulations, we utilize the EH model whose parameters are summarized in Table~\ref{Table:SimulationSetup} in Section~\ref{Section:SimulationResults}.
 
Thus, the average power harvested by user $k$ during the energy transfer phase can be expressed as follows
\begin{equation}
	{p}^\text{d}_{k} =\sum_{n=1}^{N} \tau^\text{d}_{n} \phi_k\big(|z_{k,n}|^2\big),
	\label{Eqn:HarvestedPower}
\end{equation}
\noindent where $z_{k,n} = \boldsymbol{h}_k \boldsymbol{x}_n$ is the \gls*{ecb} representative of RF energy signal $z^{\text{RF}}_{k}(t)$ received in time slot $n$ at user $k$.
Furthermore, we assume that user $k$ is equipped with a rechargeable built-in battery, whose initial energy, $q_{k}$, is known\footnotemark\hspace*{0pt} at the BS \cite{Morsi2018a}.
Thus, at the end of the downlink transmission phase, the amount of energy available at user $k$ is given by 
\begin{equation}
	e_{k}(\boldsymbol{X}, \boldsymbol{\tau}^\text{d}, N) = q_{k} + T_\text{f} \sum_{n=1}^{N} \tau^\text{d}_{n} \phi_k\big(|\boldsymbol{h}_k \boldsymbol{x}_n|^2\big),
\end{equation}
\noindent\hspace*{0pt}where matrix $\boldsymbol{X} = [\boldsymbol{x}_1 \; \boldsymbol{x}_2\; \cdots\; \boldsymbol{x}_N]$ contains the transmit energy signal vectors and $\boldsymbol{\tau}^\text{d} = [\tau^\text{d}_1, \tau^\text{d}_2, \cdots, \tau^\text{d}_N]^\top$ is the vector of time durations allocated to the corresponding time slots.

\footnotetext{We note that if the initial energies of the user devices are not known at the BS, we can assume $q_k = \SI{0}{\joule}, \forall k$.}
\subsection{Information Transmission Phase}
In the uplink phase, the users transmit information to the BS exploiting the energy available in their batteries.
The symbol vector $\boldsymbol{r} \in \mathbb{C}^{N_\text{t}}$ received at the BS in the scheduled time slot is given by
\vspace*{-10pt}
\begin{equation}
	\vspace*{-5pt}
	\boldsymbol{r} = \sum_{k=1}^{K} \boldsymbol{h}_k^H \sqrt{p^\text{u}_{k}} s_k + \boldsymbol{n},
\end{equation}
where $p^\text{u}_{k}$ is the power utilized by user $k$ for transmission of information symbol $s_k \in \mathbb{C}$, which is modelled as a complex zero-mean and unit-variance Gaussian random variable, and $\boldsymbol{n} \in \mathbb{C}^{N_\text{t}}$ is an \gls*{awgn} vector with zero mean and covariance matrix $\sigma^2 \boldsymbol{I}_{N_\text{t}}$.
To reduce computational complexity and suppress inter-user interference, we assume \gls*{zf} equalization\footnotemark\hspace*{0pt} at the BS.
\footnotetext{We note that \gls*{zf} equalization is close to optimal if a large number of antennas is deployed at the BS, i.e., $N_\text{t} \gg K$, which is a preferred regime for wireless power transfer systems, where a significant downlink beamforming gain is generally required to be able to harvest meaningful amounts of power \cite{Yang2015, Interdonato2020}.}
Thus, the detected information symbol $\hat{s}_k$ of user $k$ at the BS can be expressed as follows:
\vspace*{-10pt}
\begin{equation}
	\vspace*{-7pt}
	\hat{s}_k = \boldsymbol{f}_{k} \boldsymbol{r} = \sqrt{p^\text{u}_{k}} s_k + \tilde{n}_k,
\end{equation}
\noindent where $\tilde{n}_k = \boldsymbol{f}_{k} \boldsymbol{n}$ is the equivalent \gls*{awgn} with variance $\tilde{\sigma}^2_k = \|\boldsymbol{f}_{k}\|^2_2\sigma^2$ impairing the detected information signal transmitted by user $k$.
Here, equalization vector $\boldsymbol{f}_{k} \in \mathbb{C}^{1 \times N_\text{t}}$ is the $k^\text{th}$ row of matrix $\boldsymbol{F} = (\boldsymbol{H}_\text{u}^H \boldsymbol{H}_\text{u})^{-1} \boldsymbol{H}_\text{u}^H$, where $\boldsymbol{H}_\text{u} = [\boldsymbol{h}_1^H \; \boldsymbol{h}_2^H \; \cdots \; \boldsymbol{h}_K^H]$ is the composite uplink channel between the users and the BS.
Finally, the data rate of user $k$ is given by $ R_k(\bar{\tau}, {p}_k^\text{u}) = (1-\bar{\tau}) \log_2(1 + \Gamma_{k})$, where $\Gamma_{k} = { p^\text{u}_{k}}/{\tilde{\sigma}_k^2 }$ is the \gls*{snr} at the BS for the detected information symbol transmitted by user $k$.

\section{Problem Formulation and Optimal Solution}
\label{Section:ProblemFormulationAndOptimalSolution}
In this section, we jointly optimize the waveform of the downlink energy signal, i.e., the downlink transmit energy signal vectors, their number, the durations of the transmit pulses, and the time allocation policy, i.e., the normalized durations of the downlink and uplink subframes.
To this end, we first formulate an optimization problem, and then, we determine the optimal solution for single-user and multi-user WPCNs, respectively.
\subsection{Problem Formulation}
In the following, we formulate an optimization problem for the minimization of the average transmit power at the BS under per-user rate constraints in the uplink.
For a given time frame, the optimal transmit energy signal vectors and time allocation policy are obtained as the solution of the following non-convex optimization problem:
\begin{subequations}
	\vspace*{-10pt}
	\begin{align}
		\minimize_{\substack{\boldsymbol{\tau}^\text{d} \succeq \boldsymbol{0}, \bar{\tau} \in [0,1],  \boldsymbol{X}, \boldsymbol{p}^\text{u}, N}  } \; &P_{\text{DL}} (\boldsymbol{\tau}^\text{d}, \boldsymbol{X}, N) \label{Eqn:ProblemObj} \\
		\subjectto \; \quad & R_k(\bar{\tau}, {p}_k^\text{u}) \geq R^{\text{req}}_{k}, \, \forall k, \label{Eqn:ProblemC1} \\
		& e_{k} (\boldsymbol{X}, \boldsymbol{\tau}^\text{d}, N) \geq (1-\bar{\tau})p^\text{u}_{k} T_\text{f} + p^\text{req}_kT_\text{f}, \, \forall k, \label{Eqn:ProblemC2} \\
		&\sum_{n=1}^{N} \tau^\text{d}_{n} = \bar{\tau}, \label{Eqn:ProblemC3}
	\end{align}
	\label{Eqn:OriginalProblem}
\end{subequations}
\noindent\hspace*{-3pt}where $P_{\text{DL}}(\boldsymbol{\tau}^\text{d}, \boldsymbol{X}, N) = \sum_{n=1}^{N} \tau^\text{d} _n \| \boldsymbol{x}_n \|_2^2$ and $\boldsymbol{p}^\text{u} = [p^\text{u}_1 \; p^\text{u}_2 \; \cdots \; p^\text{u}_K]^\top$ are the average transmit power in the energy transfer phase and the vector of powers utilized for information transmission in the uplink, respectively.
In (\ref{Eqn:OriginalProblem}), $R^{\text{req}}_{k}$ and $p^\text{req}_k$ are the required rate in the uplink and the power needed at user $k$ to perform operational tasks other than information transmission, e.g., sensing and signal processing tasks, respectively.
We assume that the values of $R^{\text{req}}_{k}$ and $p^\text{req}_k$ are set by user $k$ and communicated to the BS beforehand. 
We note that in contrast to \cite{Ju2014, Liu2014, Yang2015, Lee_2016, Morsi2018a, Boshkovska2015, Boshkovska2017a, Boshkovska2018, Hua2022, Zeng2021, Li2022, Clerckx2015, Clerckx2017, Zawawi2019}, where the covariance matrix $\tilde{\boldsymbol{X}} = \sum_{n=1}^{N} \frac{\tau^\text{d}_n}{T_\text{f}} \boldsymbol{x}_n \boldsymbol{x}_n^H$ of the downlink energy signal is optimized, the EH model in (\ref{Eqn:EhModel}) enables the direct optimization of the individual transmit energy signal vectors $\boldsymbol{x}_1, \boldsymbol{x}_2, \cdots, \boldsymbol{x}_N$ of the pulse-modulated energy signal $\boldsymbol{x}(t)$.

Due to the non-convexity of the objective function (\ref{Eqn:ProblemObj}) and constraints (\ref{Eqn:ProblemC1}) and (\ref{Eqn:ProblemC2}), optimization problem (\ref{Eqn:OriginalProblem}) is, in general, difficult to solve.
In the following, we first show that for single-user WPCNs, optimization problem (\ref{Eqn:OriginalProblem}) can be solved in closed-form.
Next, for multi-user WPCNs, we characterize the optimal solution of (\ref{Eqn:OriginalProblem}) for the general case and determine it based on a $K$-{dimensional} grid search.
\subsection{Single-User WPCNs}
\label{Section:SingleUser}
In this section, we solve optimization problem (\ref{Eqn:OriginalProblem}) for single-user WPCNs and thus, for notational convenience, we drop the subscripts for user $1$.
The optimal power ${p^\text{u}}^{*}$ for uplink information transmission has to satisfy constraints (\ref{Eqn:ProblemC1}) and (\ref{Eqn:ProblemC2}) and thus, can be chosen arbitrarily\footnotemark\hspace*{0pt} in the interval $[p^\text{u}_\text{min}, p^\text{u}_\text{max}]$.
Here, $p^\text{u}_\text{min} = (2^{\frac{R^\text{req}}{1-\bar{\tau}}}-1) \tilde{\sigma}^2$ is the minimum power required for uplink transmission with required rate $R^\text{req}$ and $p^\text{u}_\text{max} = \frac{1}{1-\bar{\tau}} \big( \frac{q}{T_\text{f}} - p^\text{req} + \sum_n {\tau^\text{d}_n} \phi(|\boldsymbol{h} \boldsymbol{x}_n|^2) \big)$ is the power available at the user device at the end of the energy transfer phase.
The optimal time sharing parameter $\bar{\tau}^*$, the normalized lengths of the time slots ${\tau^\text{d}_n}^*$, and the transmit energy signal vectors in the downlink $\boldsymbol{x}_n^*, n\in\{1,2,\cdots, N\},$ can be obtained as the solution of the following optimization problem:
\vspace*{-5pt}
\begin{equation}
		\minimize_{\boldsymbol{\tau}^\text{d} \succeq \boldsymbol{0}, \substack{\bar{\tau} \in [0,1], \boldsymbol{X}, N}  } \; \sum_{n=1}^{N} \tau^\text{d}_n \| \boldsymbol{x}_n \|_2^2 \quad
		\subjectto \;  \sum_{n=1}^{N} \tau^\text{d}_{n} \phi(|\boldsymbol{h} \boldsymbol{x}_n|^2) \geq \xi(\bar{\tau}), \;
		\sum_{n=1}^{N} \tau^\text{d}_{n} = \bar{\tau},
	\label{Eqn:SUProblem}
	\vspace*{-5pt}
\end{equation}
\noindent\hspace*{-0pt}where $\xi(\bar{\tau}) = p^\text{req} - \frac{q}{T_\text{f}} + (1-\bar{\tau}) \big( 2^{\frac{R^\text{req}}{1-\bar{\tau}}} - 1 \big) \tilde{\sigma}^2$.
\footnotetext{We note that for power-efficient communication, if $p^\text{u}_\text{max} > p^\text{u}_\text{min}$, the user devices may select ${p^\text{u}}^{*} = p^\text{u}_\text{min}$ to save power for future use.}
We note that similar to (\ref{Eqn:OriginalProblem}), problem (\ref{Eqn:SUProblem}) is still non-convex.
However, in the following proposition, we show that the optimal solution of (\ref{Eqn:SUProblem}) requires only a single energy signal vector, i.e., $N^*=1$, which is collinear with the \gls*{mrt} vector.
\begin{proposition}
	The optimal transmit signal in the energy transfer phase as solution of problem (\ref{Eqn:SUProblem}) employs a single energy signal vector $\boldsymbol{x}^*~=~ \boldsymbol{w}^* s^*$, where $\boldsymbol{w}^* = \frac{\boldsymbol{h}^H}{\| \boldsymbol{h} \|_2}$ is the \gls*{mrt} beamformer and $s^*= \alpha^*_s \exp(j\theta_s)$ is a scalar symbol.
	The magnitude of optimal symbol $s^*$ is given by $\alpha_s^* = \frac{A}{\| \boldsymbol{h} \|_2}$, whereas the phase of $s^*$ can be arbitrarily chosen, i.e., $\theta_s \in [0, 2\pi)$.
	\label{Theorem:SingleUser}
\end{proposition}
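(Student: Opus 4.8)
The plan is to peel off the three assertions of the proposition---MRT direction, saturation amplitude, and single vector---in that order, after first reducing the vector-valued problem to a scalar one.

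First I would introduce the per-slot received power $u_n := |\boldsymbol{h}\boldsymbol{x}_n|^2$ and invoke the Cauchy--Schwarz inequality $u_n=|\boldsymbol{h}\boldsymbol{x}_n|^2\le\norm{\boldsymbol{h}}^2\norm{\boldsymbol{x}_n}^2$, with equality if and only if $\boldsymbol{x}_n$ is collinear with $\boldsymbol{h}^H$, i.e.\ $\boldsymbol{x}_n=\boldsymbol{w}^* s_n$ with $\boldsymbol{w}^*=\boldsymbol{h}^H/\norm{\boldsymbol{h}}$. Hence, for any prescribed received power $u_n$, the MRT direction achieves the minimum transmit power $\norm{\boldsymbol{x}_n}^2=u_n/\norm{\boldsymbol{h}}^2$ while leaving the harvested term $\phi(u_n)$ in the constraint of (\ref{Eqn:SUProblem}) unchanged. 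Thus every $\boldsymbol{x}_n$ may be taken collinear with $\boldsymbol{w}^*$ without loss of optimality, and (\ref{Eqn:SUProblem}) collapses to the scalar program of choosing levels $u_n\ge 0$ and durations $\tau^\text{d}_n$ to minimize $\tfrac{1}{\norm{\boldsymbol{h}}^2}\sum_n\tau^\text{d}_n u_n$ subject to $\sum_n\tau^\text{d}_n\phi(u_n)\ge\xi(\bar{\tau})$ and $\sum_n\tau^\text{d}_n=\bar{\tau}$.

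Next I would establish a linear envelope for the EH map (\ref{Eqn:EhModel}). Since $\varphi$ is convex with $\varphi(0)=0$ (no power is harvested in the absence of a received signal), the chord joining $(0,0)$ and $(A^2,\varphi(A^2))$ lies above the graph on $[0,A^2]$, so $\varphi(u)\le\tfrac{\varphi(A^2)}{A^2}u$ there; for $u>A^2$ we have $\phi(u)=\varphi(A^2)<\tfrac{\varphi(A^2)}{A^2}u$. Setting $c:=\varphi(A^2)/A^2$, this yields
\begin{equation}
\phi(u)\le c\,u,\qquad u\ge 0,
\end{equation}
with equality if and only if $u\in\{0,A^2\}$. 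Substituting into the harvesting constraint gives, for every feasible $\bar{\tau}$,
\begin{equation}
P_{\text{DL}}=\frac{1}{\norm{\boldsymbol{h}}^2}\sum_{n}\tau^\text{d}_n u_n\ge\frac{1}{c\norm{\boldsymbol{h}}^2}\sum_{n}\tau^\text{d}_n\phi(u_n)\ge\frac{\xi(\bar{\tau})}{c\norm{\boldsymbol{h}}^2}.
\end{equation}
Equality throughout forces every active slot to sit at the saturation level $u_n=A^2$ and the harvesting constraint to be tight.

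Finally I would optimize over $\bar{\tau}$ and, crucially, rule out a silent (zero-power) slot. The map $\bar{\tau}\mapsto\xi(\bar{\tau})=p^\text{req}-q/T_\text{f}+\tilde{\sigma}^2(1-\bar{\tau})\big(2^{R^\text{req}/(1-\bar{\tau})}-1\big)$ is strictly increasing: substituting $x=1-\bar{\tau}$ and $t=R^\text{req}/x$, a short computation shows the bracketed energy term has derivative proportional to $e^{t\ln 2}(1-t\ln 2)-1$, which is negative for $t>0$, so the term decreases in $x$ and increases in $\bar{\tau}$. Because the largest attainable harvest in a frame is $\bar{\tau}\varphi(A^2)$, feasibility requires $\xi(\bar{\tau})\le\bar{\tau}\varphi(A^2)$, and over this set the increasing lower bound $\xi(\bar{\tau})/(c\norm{\boldsymbol{h}}^2)$ is minimized at the smallest feasible $\bar{\tau}^*$, where $\xi(\bar{\tau}^*)=\bar{\tau}^*\varphi(A^2)$. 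There a single slot with $\tau^\text{d}_1=\bar{\tau}^*$ and $u_1=A^2$ harvests exactly $\bar{\tau}^*\varphi(A^2)=\xi(\bar{\tau}^*)$ and attains $P_{\text{DL}}=\bar{\tau}^*A^2/\norm{\boldsymbol{h}}^2=\xi(\bar{\tau}^*)/(c\norm{\boldsymbol{h}}^2)$, meeting the lower bound with no silent slot. Hence $N^*=1$ and $|\boldsymbol{h}\boldsymbol{x}^*|^2=\norm{\boldsymbol{h}}^2|s^*|^2=A^2$, so $\alpha_s^*=A/\norm{\boldsymbol{h}}$; and since both $|\boldsymbol{h}\boldsymbol{x}|$ and $\norm{\boldsymbol{x}}$ are invariant to the phase of $s$, $\theta_s\in[0,2\pi)$ is arbitrary. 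The main obstacle is precisely this last step: for a \emph{fixed} $\bar{\tau}$ with a strictly loose harvesting margin, the scalar bound is attained by a two-point allocation on $\{0,A^2\}$ (a saturation slot plus a silent slot), so single-vector optimality is not a per-$\bar{\tau}$ property---it emerges only after the joint optimization over $\bar{\tau}$, once monotonicity of $\xi$ pins the optimum to the boundary where the harvesting constraint is active and the silent slot disappears.
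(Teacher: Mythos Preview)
Your proof is correct and follows essentially the same three-step logic as the paper: (i) MRT reduction via Cauchy--Schwarz, (ii) the on--off structure $\{0,A^2\}$ from convexity of $\varphi$, and (iii) elimination of the silent slot via monotonicity of $\xi(\bar{\tau})$. The only difference is that where the paper defers step~(ii) to prior work, you supply a self-contained linear-envelope argument $\phi(u)\le cu$ and an explicit derivative computation for step~(iii); this makes your version more transparent but is not a different route.
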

\begin{proof}
	Please refer to Appendix~\ref{Appendix:PropSU}.	
\end{proof}

Proposition~\ref{Theorem:SingleUser} implies that in the downlink energy transfer phase, the optimal signal consists of a single transmit energy signal vector $\boldsymbol{x}^*$, which is collinear with the MRT beamformer and whose magnitude is chosen to drive the EH circuit into saturation.
Exploiting Proposition~\ref{Theorem:SingleUser}, we determine the optimal $\bar{\tau}^*$ that solves (\ref{Eqn:SUProblem}) as follows:
\vspace*{-5pt}
\begin{equation}
	\vspace*{-5pt}
	\bar{\tau}^* = \min \{\bar{\tau}: f_\text{SU}(\bar{\tau}) \geq 0\},
	\label{Eqn:SingleUserProblemTau}
\end{equation}
\noindent where $f_\text{SU}(\bar{\tau})  = \bar{\tau} \phi( A^2\big) - \xi(\bar{\tau})$. 
Furthermore, if problem (\ref{Eqn:SUProblem}) is feasible and does not have a non-trivial solution, i.e., $\bar{\tau}^* > 0$ or, equivalently, $f_\text{SU}(0) = \frac{q}{T_\text{f}} - p^\text{req} - \big( 2^{R^\text{req}} - 1 \big) \tilde{\sigma}^2 < 0$, the optimal $\bar{\tau}^*$ can be obtained as the minimum root of equation $f_\text{SU}(\bar{\tau}) = 0$, i.e., $\bar{\tau}^* = \min \{\bar{\tau}: f_\text{SU}(\bar{\tau}) = 0\}$.

\subsection{Multi-User WPCNs}
In this section, we determine the optimal solution of (\ref{Eqn:OriginalProblem}) for multi-user WPCNs, i.e., for $K > 1$.
First, we reformulate constraint (\ref{Eqn:ProblemC1}) equivalently as $p^\text{u}_k \geq \xi^\text{u}_k (\bar{\tau}), \, \forall k,$ where $\xi^\text{u}_k(\bar{\tau}) = \tilde{\sigma}_k^2 \big( 2^\frac{R^\text{req}_k}{ 1 - \bar{\tau} } - 1 \big)$ is the minimum power required by user $k$ for information transmission with rate $R_k^\text{req}$. 
Next, similarly to single-user WPCNs in Section~\ref{Section:SingleUser}, for given $\bar{{\tau}}$, $\boldsymbol{\tau}^\text{d}, \boldsymbol{X}$, and $N$, the optimal uplink transmit power $p^\text{u*}_k$ for user $k$ satisfying constraints (\ref{Eqn:ProblemC1}) and (\ref{Eqn:ProblemC2}) can be arbitrarily chosen from the interval $[\xi^\text{u}_k(\bar{\tau}), p^\text{u}_{\text{max}, k}]$, where $p^\text{u}_{\text{max}, k}$ is the maximum power available at user $k$ for uplink transmission and is given by
\vspace*{-5pt}
\begin{equation}
	\vspace*{-5pt}
	p^\text{u}_{\text{max}, k} = \frac{1}{1-\bar{\tau}} \Big( \frac{q_k}{T_\text{f}} - p_k^\text{req} + \sum_{n=1}^{N} {\tau^\text{d}_n} \phi_k(|\boldsymbol{h}_k \boldsymbol{x}_n|^2) \Big).
\end{equation}
Then, problem (\ref{Eqn:OriginalProblem}) can be simplified as follows:
\begin{equation}
	\minimize_{\boldsymbol{\tau}^\text{d} \succeq \boldsymbol{0}, \bar{\tau} \in [0,1], \boldsymbol{X}, N } \; \sum_{n=1}^{N} \tau^\text{d}_{n} \| \boldsymbol{x}_n \|_2^2 \quad
	\subjectto \; \sum_{n=1}^{N} \tau^\text{d}_{n} = \bar{\tau}, \; \sum_{n=1}^{N} \tau^\text{d}_{n} \phi_k\big(|\boldsymbol{h}_k \boldsymbol{x}_n|^2\big) \geq \xi^\text{d}_k(\bar{\tau}), \, \forall k,
	\label{Eqn:RefProblem}
\end{equation}
\noindent where $\xi^\text{d}_k(\bar{\tau}) = (1-\bar{\tau})\xi^\text{u}_{k}(\bar{\tau}) + p^\text{req}_k - \frac{q_k}{T_\text{f}}$ is the required harvested power at user $k$.
	
In the following proposition, we characterize the optimal solution of (\ref{Eqn:RefProblem}).
\begin{proposition}
	The optimal transmit energy signal vectors $\boldsymbol{x}^*_n, n\in\{1,2,\cdots, N\},$ as solution of (\ref{Eqn:RefProblem}) can be expressed as follows
	\vspace*{-10pt}
	\begin{equation}
		\vspace*{-5pt}
		\boldsymbol{x}^*_n = \boldsymbol{w}_n^* s_n,
		\label{Eqn:TransmitSymbolDecomposition}
	\end{equation}
	\noindent where $s_n = \exp(j \theta_n), n \in \{1,2,\cdots, N\},$ are scalar unit-norm symbols with arbitrary phases $\theta_n = [0, 2\pi)$. 
	Energy beamforming vectors $\boldsymbol{w}_n^*$, $n\in\{1,2,\cdots, N\}$, can be obtained as solutions of the following optimization problem:
	\vspace*{-5pt}
	\begin{equation}
		\vspace*{-5pt}
		\minimize_{\boldsymbol{w}_n} \; \|\boldsymbol{w}_n\|_2^2 \quad \subjectto \; \phi_k(|\boldsymbol{h}_k \boldsymbol{w}_n|^2) \geq \mu_{n,k}, \, \forall k.
		\label{Eqn:OptimalTransmitVectors}
	\end{equation}
	For each $n\in\{1,2,\cdots, N\}$, vector $\boldsymbol{\mu}_n = [\mu_{n,1}, \mu_{n,2}, \cdots, \mu_{n,K}]^\top \in \mathbb{R}_{+}^K$ contains the instantaneous powers harvested at user devices $1,2,\cdots, K$ in time slot $n$.
	Furthermore, the optimal $\bar{\tau}^*$, ${\boldsymbol{\tau}^\text{\upshape{d}} }^*$, vectors $\boldsymbol{\mu}^*_n, n\in\{1,2,\cdots, N\}$, and $N^*$ can be determined as solution of the following resource allocation problem:
	\vspace*{-10pt}
	\begin{equation}
		\vspace*{-5pt}
		 	\minimize_{\boldsymbol{\tau}^\text{\upshape{d}} \succeq \boldsymbol{0}, \bar{\tau} \in [0,1], \boldsymbol{\mu}_1, \cdots, \boldsymbol{\mu}_{N}, N} \; \sum_{n=1}^{N} \tau^\text{\upshape{d}}_n \psi(\boldsymbol{\mu}_n) \quad
			\subjectto \; \sum_{n=1}^{N} \tau^\text{\upshape{d}}_{n} \boldsymbol{\mu}_{n} \succeq \boldsymbol{\xi}^\text{\upshape{d}}(\bar{\tau}), \;
			\sum_{n=1}^{N} \tau^\text{\upshape{d}}_{n} = \bar{\tau},
	\label{Eqn:OptimalResourceAlloc}
	\end{equation}
	\noindent\hspace*{0pt}with $\boldsymbol{\xi}^\text{\upshape{d}}(\bar{\tau}) = [{\xi}^\text{\upshape{d}}_1(\bar{\tau}), \; {\xi}^\text{\upshape{d}}_2(\bar{\tau}), \; \cdots, \; {\xi}^\text{\upshape{d}}_K(\bar{\tau})]^\top$.
	Here, $\psi(\boldsymbol{\mu}) : \mathcal{R}^K \to \mathcal{R}$ is a monotonic non-decreasing function given by
	\vspace*{-10pt}
	\begin{equation}
		\vspace*{-5pt}
		\psi(\boldsymbol{\mu}) = \min_{\boldsymbol{x} \in \Omega(\boldsymbol{\mu})} \| \boldsymbol{x} \|^2_2
		\label{Eqn:FunctionPsi}
	\end{equation} 
	\noindent with $\Omega(\boldsymbol{\mu}) = \{\boldsymbol{x}: \phi_k(|\boldsymbol{h}_k \boldsymbol{x}|^2) \geq \mu_{k}, \forall k\}$.
	\label{Theorem:MuProp1}
\end{proposition}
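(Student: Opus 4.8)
The plan is to recast the joint optimization (\ref{Eqn:RefProblem}) as a nested two-level problem by introducing, for each slot $n$ and user $k$, an auxiliary variable $\mu_{n,k}$ that stands for the instantaneous harvested power $\phi_k(|\boldsymbol{h}_k\boldsymbol{x}_n|^2)$. The key structural observation is that, once the target vectors $\boldsymbol{\mu}_n$ (together with $\bar{\tau}$, $\boldsymbol{\tau}^\text{d}$, and $N$) are fixed, the objective $\sum_n \tau^\text{d}_n \|\boldsymbol{x}_n\|_2^2$ separates across slots and the constraints tying $\boldsymbol{x}_n$ to $\boldsymbol{\mu}_n$ are independent for different $n$. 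Consequently, the inner minimization over each $\boldsymbol{x}_n$ is precisely (\ref{Eqn:OptimalTransmitVectors}), whose optimal value equals $\psi(\boldsymbol{\mu}_n)$ by (\ref{Eqn:FunctionPsi}), while the remaining outer minimization over $\bar{\tau}$, $\boldsymbol{\tau}^\text{d}$, $\{\boldsymbol{\mu}_n\}$, and $N$ is the resource allocation problem (\ref{Eqn:OptimalResourceAlloc}). Since $N$ is common to both levels, its optimizer $N^*$ for (\ref{Eqn:OptimalResourceAlloc}) is also optimal for (\ref{Eqn:RefProblem}).

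I would then prove the equivalence of the two formulations by bounding the optimal value of (\ref{Eqn:RefProblem}) from both sides. For one direction, starting from any feasible $(\boldsymbol{X}, \boldsymbol{\tau}^\text{d}, \bar{\tau}, N)$ of (\ref{Eqn:RefProblem}), I set $\mu_{n,k} = \phi_k(|\boldsymbol{h}_k\boldsymbol{x}_n|^2)$; the harvested-power constraint then reads $\sum_n \tau^\text{d}_n \boldsymbol{\mu}_n \succeq \boldsymbol{\xi}^\text{d}(\bar{\tau})$, and since $\boldsymbol{x}_n \in \Omega(\boldsymbol{\mu}_n)$ we have $\psi(\boldsymbol{\mu}_n) \leq \|\boldsymbol{x}_n\|_2^2$, so the induced point is feasible for (\ref{Eqn:OptimalResourceAlloc}) at no larger cost. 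For the reverse direction, starting from any feasible $(\{\boldsymbol{\mu}_n\}, \boldsymbol{\tau}^\text{d}, \bar{\tau}, N)$ of (\ref{Eqn:OptimalResourceAlloc}), I take a minimizer $\boldsymbol{w}_n^*$ of (\ref{Eqn:OptimalTransmitVectors}) and reconstruct $\boldsymbol{x}_n = \boldsymbol{w}_n^* s_n$ with $s_n = \exp(j\theta_n)$. Because $\phi_k$ acts on $\boldsymbol{x}_n$ only through $|\boldsymbol{h}_k\boldsymbol{x}_n|^2$, which is invariant under multiplication by the unit-modulus scalar $s_n$, the reconstruction obeys $\phi_k(|\boldsymbol{h}_k\boldsymbol{x}_n|^2) = \phi_k(|\boldsymbol{h}_k\boldsymbol{w}_n^*|^2) \geq \mu_{n,k}$ and $\|\boldsymbol{x}_n\|_2^2 = \|\boldsymbol{w}_n^*\|_2^2 = \psi(\boldsymbol{\mu}_n)$ for every $\theta_n \in [0,2\pi)$. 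This establishes feasibility for (\ref{Eqn:RefProblem}) at equal objective value, and at the same time yields the claimed signal structure (\ref{Eqn:TransmitSymbolDecomposition}) together with the arbitrariness of the phases.

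The monotonicity of $\psi$ follows immediately from its definition: whenever $\boldsymbol{\mu} \preceq \boldsymbol{\mu}'$, each constraint $\phi_k(|\boldsymbol{h}_k\boldsymbol{x}|^2)\geq \mu_k'$ is at least as tight as $\phi_k(|\boldsymbol{h}_k\boldsymbol{x}|^2)\geq \mu_k$, hence $\Omega(\boldsymbol{\mu}')\subseteq \Omega(\boldsymbol{\mu})$, and minimizing $\|\boldsymbol{x}\|_2^2$ over the smaller set cannot decrease the optimum, giving $\psi(\boldsymbol{\mu})\leq\psi(\boldsymbol{\mu}')$.

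The step I expect to be most delicate is making the two-level decomposition exact rather than a mere relaxation: I must argue that the auxiliary variables $\boldsymbol{\mu}_n$ may be chosen freely at the outer level without excluding any signal feasible for (\ref{Eqn:RefProblem}), and conversely that every outer-feasible $\boldsymbol{\mu}_n$ is realizable by some $\boldsymbol{x}_n$. This relies on the per-slot decoupling of the inner problem and on the attainment of the minimum in (\ref{Eqn:FunctionPsi}); I would therefore also verify that, for the $\boldsymbol{\mu}_n$ arising in the outer problem, the set $\Omega(\boldsymbol{\mu}_n)$ is nonempty, so that $\boldsymbol{\mu}_n$ is compatible with the saturation ceilings of the $\phi_k$, and closed, ensuring $\psi(\boldsymbol{\mu}_n)$ is well defined and the minimizer $\boldsymbol{w}_n^*$ used in the reconstruction exists.
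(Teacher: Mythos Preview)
Your proposal is correct and follows essentially the same approach as the paper: introduce the auxiliary harvested-power variables $\mu_{n,k}$, observe that the objective and constraints then decouple across slots so that the inner per-slot minimization is exactly $\psi(\boldsymbol{\mu}_n)$, and use the phase invariance of $|\boldsymbol{h}_k\boldsymbol{x}_n|^2$ to obtain the decomposition $\boldsymbol{x}_n^* = \boldsymbol{w}_n^* s_n$. Your argument is in fact more carefully spelled out than the paper's (you explicitly bound the optimal values from both directions and verify monotonicity of $\psi$), whereas the paper simply asserts the equivalent reformulation and the identity $\|\boldsymbol{x}_n^*\|_2^2 = \psi(\boldsymbol{\mu}_n^*)$.
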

\begin{proof}
	Please refer to Appendix \ref{Appendix:Prop1}.
\end{proof}

Proposition~\ref{Theorem:MuProp1} reveals that problem (\ref{Eqn:RefProblem}) can be solved by first determining function $\psi(\boldsymbol{\mu})$ in (\ref{Eqn:FunctionPsi}), and subsequently obtaining the optimal time sharing parameter, number of time slots, their normalized lengths, and vectors of harvested powers in the downlink, $\bar{\tau}^*, N^*, {\boldsymbol{\tau}^\text{d}}^*$, and $\boldsymbol{\mu}^*_n, n\in\{1,2,\cdots, N^*\}$, respectively, as solution of optimization problem (\ref{Eqn:OptimalResourceAlloc}).
Finally, the optimal downlink transmit energy signal vectors in (\ref{Eqn:TransmitSymbolDecomposition}) comprise energy beamformers $\boldsymbol{w}_n^*$ and scalar unit-norm symbols\footnotemark\hspace*{0pt} $s_n$.
The optimal energy beamforming vectors $\boldsymbol{w}_n^*$ in (\ref{Eqn:TransmitSymbolDecomposition}) can be obtained as solutions of the non-convex problem (\ref{Eqn:OptimalTransmitVectors}).
\footnotetext{Since scalar phase  $\theta_n$ of $s_n^*$ can be chosen arbitrarily, it can be utilized for, e.g., information transmission in the downlink \cite{Shanin2021a}.}

In the following, we separately solve optimization problems (\ref{Eqn:OptimalTransmitVectors}), (\ref{Eqn:OptimalResourceAlloc}), and (\ref{Eqn:FunctionPsi}).
Although problems (\ref{Eqn:OptimalTransmitVectors}) - (\ref{Eqn:FunctionPsi}) are non-convex, we show that the computational complexity of solving (\ref{Eqn:OptimalResourceAlloc}) and (\ref{Eqn:FunctionPsi}) depends on the number of users $K$ and is independent of the number of BS antennas $N_\text{t}$, whereas the solution of (\ref{Eqn:OptimalTransmitVectors}) has a computational complexity that is polynomial in $N_\text{t}$.

	\subsubsection{Solution of Problem (\ref{Eqn:FunctionPsi})}
	\label{Section:OptimalFunctionPsi}
	In this section, for any given vector $\boldsymbol{\mu} \in \mathcal{R}^K$, we determine the value of function $\psi(\boldsymbol{\mu})$ in (\ref{Eqn:FunctionPsi}).
To this end, in the following proposition, we show that problem (\ref{Eqn:FunctionPsi}) can be equivalently reformulated as a convex optimization problem.
\begin{proposition}
	For a given vector $\boldsymbol{\mu} \in \mathcal{R}^K$, the value of function $\psi(\boldsymbol{\mu})$ can be determined as solution of the following convex optimization problem:
	\vspace*{-7pt}
	\begin{equation}
		\vspace*{-7pt}
		\psi(\boldsymbol{\mu}) = \max_{\boldsymbol{\lambda} \succeq \boldsymbol{0}} \; \boldsymbol{\rho}^\top \boldsymbol{\lambda} \quad \subjectto \; \| \boldsymbol{B} \diag(\boldsymbol{\lambda}) \boldsymbol{B} \|_2 \leq 1,
		\label{Eqn:FunctionPsiProposition}
	\end{equation}
	\noindent where $\boldsymbol{\lambda} \in \mathbb{R}^K$, $\rho_k = \phi_k^{-1}(\mu_k), k\in\{1,2,\cdots,K\}$, and $\boldsymbol{B} = (\boldsymbol{H} \boldsymbol{H}^H)^{\frac{1}{2}}$ with $\boldsymbol{H} = \boldsymbol{H}_\text{u}^H = [\boldsymbol{h}_1^H \; \boldsymbol{h}_2^H \; \cdots \; \boldsymbol{h}_K^H]^H$.
	\label{Theorem:MuProp2}
\end{proposition}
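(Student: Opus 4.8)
The plan is to recognize problem~(\ref{Eqn:FunctionPsi}) as a non-convex quadratically constrained quadratic program (QCQP) and to show that the convex program~(\ref{Eqn:FunctionPsiProposition}) is exactly its Lagrangian dual, with no duality gap. First, I would exploit the monotonicity of the EH characteristic: since each $\phi_k(\cdot)$ is non-decreasing and invertible on the relevant range, the constraint $\phi_k(|\boldsymbol{h}_k\boldsymbol{x}|^2)\geq\mu_k$ defining $\Omega(\boldsymbol{\mu})$ is equivalent to $|\boldsymbol{h}_k\boldsymbol{x}|^2\geq\phi_k^{-1}(\mu_k)=\rho_k$. Hence~(\ref{Eqn:FunctionPsi}) becomes $\min_{\boldsymbol{x}}\,\boldsymbol{x}^H\boldsymbol{x}$ subject to $\boldsymbol{x}^H\boldsymbol{h}_k^H\boldsymbol{h}_k\boldsymbol{x}\geq\rho_k$ for all $k$, i.e., a QCQP with identity objective and $K$ reverse-convex rank-one constraints.

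Next, I would dualize. With multipliers $\boldsymbol{\lambda}\succeq\boldsymbol{0}$, the Lagrangian is $L(\boldsymbol{x},\boldsymbol{\lambda})=\boldsymbol{x}^H\big(\boldsymbol{I}-\sum_{k}\lambda_k\boldsymbol{h}_k^H\boldsymbol{h}_k\big)\boldsymbol{x}+\boldsymbol{\rho}^\top\boldsymbol{\lambda}$, and minimizing over $\boldsymbol{x}$ yields the dual function $\boldsymbol{\rho}^\top\boldsymbol{\lambda}$ when $\boldsymbol{I}-\sum_k\lambda_k\boldsymbol{h}_k^H\boldsymbol{h}_k\succeq\boldsymbol{0}$ (minimizer $\boldsymbol{x}=\boldsymbol{0}$) and $-\infty$ otherwise. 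Using $\sum_k\lambda_k\boldsymbol{h}_k^H\boldsymbol{h}_k=\boldsymbol{H}^H\diag(\boldsymbol{\lambda})\boldsymbol{H}$, the dual reads $\max_{\boldsymbol{\lambda}\succeq\boldsymbol{0}}\,\boldsymbol{\rho}^\top\boldsymbol{\lambda}$ subject to $\boldsymbol{H}^H\diag(\boldsymbol{\lambda})\boldsymbol{H}\preceq\boldsymbol{I}$. To reach the stated spectral-norm form, I would note that $\diag(\boldsymbol{\lambda})\succeq\boldsymbol{0}$ makes $\boldsymbol{H}^H\diag(\boldsymbol{\lambda})\boldsymbol{H}$ positive semidefinite, so the matrix inequality is equivalent to $\lambda_{\max}(\boldsymbol{H}^H\diag(\boldsymbol{\lambda})\boldsymbol{H})\leq 1$; and since $\boldsymbol{P}\boldsymbol{Q}$ and $\boldsymbol{Q}\boldsymbol{P}$ share their nonzero eigenvalues, the matrices $\boldsymbol{H}^H\diag(\boldsymbol{\lambda})\boldsymbol{H}$, $\diag(\boldsymbol{\lambda})\boldsymbol{H}\boldsymbol{H}^H$, and the Hermitian positive semidefinite matrix $\boldsymbol{B}\diag(\boldsymbol{\lambda})\boldsymbol{B}$, with $\boldsymbol{B}=(\boldsymbol{H}\boldsymbol{H}^H)^{1/2}$, all have the same largest eigenvalue. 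This largest eigenvalue equals $\norm{\boldsymbol{B}\diag(\boldsymbol{\lambda})\boldsymbol{B}}$, so the constraint becomes $\norm{\boldsymbol{B}\diag(\boldsymbol{\lambda})\boldsymbol{B}}\leq 1$, recovering~(\ref{Eqn:FunctionPsiProposition}).

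The hard part will be establishing the absence of a duality gap, i.e., that the dual optimum coincides with the non-convex value $\psi(\boldsymbol{\mu})$. I would argue this via semidefinite relaxation: lifting $\boldsymbol{X}=\boldsymbol{x}\boldsymbol{x}^H\succeq\boldsymbol{0}$ and discarding the rank constraint turns~(\ref{Eqn:FunctionPsi}) into the convex SDP $\min_{\boldsymbol{X}\succeq\boldsymbol{0}}\Tr{\boldsymbol{X}}$ subject to $\boldsymbol{h}_k\boldsymbol{X}\boldsymbol{h}_k^H\geq\rho_k$, whose Lagrangian dual is identical to the problem derived above; since $\boldsymbol{X}=c\boldsymbol{I}$ is strictly feasible for large $c$, Slater's condition holds and the SDP attains its dual value. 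It then remains to show that the relaxation is tight, namely that an optimal $\boldsymbol{X}^*$ of rank one exists, so that its rank-one factor $\boldsymbol{x}^*$ is primal optimal with $\Tr{\boldsymbol{X}^*}=\psi(\boldsymbol{\mu})$. I expect this rank-reduction step---which must exploit the identity objective, the rank-one constraint matrices $\boldsymbol{h}_k^H\boldsymbol{h}_k$, and complementary slackness linking $\boldsymbol{\lambda}^*$ to the active constraints---to be the principal obstacle, whereas the reduction to a QCQP and the eigenvalue manipulations above are routine.
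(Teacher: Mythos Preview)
Your proposal is correct and follows essentially the same route as the paper: rewrite the constraints via the monotonicity of $\phi_k$, lift to the SDP relaxation, invoke Slater's condition for strong duality with the dual you derived, and reduce the semidefinite constraint $\boldsymbol{I}-\boldsymbol{H}^H\diag(\boldsymbol{\lambda})\boldsymbol{H}\succeq\boldsymbol{0}$ to the spectral-norm bound on $\boldsymbol{B}\diag(\boldsymbol{\lambda})\boldsymbol{B}$ (the paper does this via the SVD of $\boldsymbol{H}$, you via the $PQ$/$QP$ eigenvalue identity, but the content is the same). The rank-one tightness you flag as the principal obstacle is precisely what the paper isolates as Lemma~\ref{Theorem:Lemma} and proves separately through the KKT conditions of the SDP, so your identification of the gap and how to close it is on target.
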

\begin{proof}
	Please refer to Appendix~\ref{Appendix:Prop2}.
\end{proof}

Proposition~\ref{Theorem:MuProp2} reveals that, for any given $\boldsymbol{\mu}$, the value of $\psi(\boldsymbol{\mu})$ can be obtained as a solution of the convex optimization problem (\ref{Eqn:FunctionPsiProposition}) via a standard numerical optimization tool, such as CVX \cite{Grant2015}. 
Furthermore, we highlight that the computational complexity of (\ref{Eqn:FunctionPsiProposition}) does not depend on the number of transmit antennas $N_\text{t}$ \cite{Polik2010}. 

	\subsubsection{Solution of Problem (\ref{Eqn:OptimalResourceAlloc})}
	\label{Section:OptimalResourceAllocation}
	In the following, for function $\psi(\boldsymbol{\mu})$ introduced in (\ref{Eqn:FunctionPsi}), we solve optimization problem (\ref{Eqn:OptimalResourceAlloc}).
To this end, in the following proposition, we first show that for any function $\psi(\boldsymbol{\mu})$, the optimal value of $N$ satisfies $N^* \leq K+1$.
Next, we propose a grid search algorithm to solve (\ref{Eqn:OptimalResourceAlloc}).

\begin{proposition}
	For any given function $\psi(\boldsymbol{\mu})$, the optimal number of time slots for the downlink power transfer phase satisfies $N^*\leq K+1$.
	\label{Theorem:MuProp3}
\end{proposition}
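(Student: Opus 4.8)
The plan is to prove the bound by a support-reduction (Carath\'eodory-type) argument applied to the atoms $\boldsymbol{\mu}_1,\dots,\boldsymbol{\mu}_N$ of an optimal solution of (\ref{Eqn:OptimalResourceAlloc}). First I would fix the optimal time-sharing parameter $\bar{\tau}^*$ and treat $\bar{\tau}$ as a constant throughout; since the claim concerns the number of slots needed at the overall optimum, it suffices to show that, for this fixed $\bar{\tau}^*$, there exists an optimal choice of $\{\tau^\text{d}_n,\boldsymbol{\mu}_n\}$ using at most $K+1$ slots. The key observation is that, once the vectors $\boldsymbol{\mu}_n$ are fixed, both the objective $\sum_n \tau^\text{d}_n\psi(\boldsymbol{\mu}_n)$ and the left-hand sides of the two constraints in (\ref{Eqn:OptimalResourceAlloc}) are \emph{linear} in the weights $\tau^\text{d}_n$, so that $\psi(\cdot)$ enters only through the fixed scalars $\psi(\boldsymbol{\mu}_n)$ and its monotonicity is not even needed here.

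Second, I would set up a weight perturbation. Suppose an optimal solution uses $N\geq K+2$ slots, where after discarding any slots with $\tau^\text{d}_n=0$ all retained weights are strictly positive. I look for a nonzero vector $\boldsymbol{\beta}=[\beta_1,\dots,\beta_N]^\top$ satisfying the $K+1$ homogeneous linear equations $\sum_{n=1}^N \beta_n = 0$ and $\sum_{n=1}^N \beta_n \boldsymbol{\mu}_n = \boldsymbol{0}$. Because this is a homogeneous system of $K+1$ equations in $N\geq K+2$ unknowns, a nontrivial solution $\boldsymbol{\beta}$ exists. I then perturb the weights as $\tau^\text{d}_n(t)=\tau^\text{d}_n+t\beta_n$. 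By construction $\sum_n \tau^\text{d}_n(t)=\bar{\tau}^*$ and $\sum_n \tau^\text{d}_n(t)\boldsymbol{\mu}_n = \sum_n \tau^\text{d}_n\boldsymbol{\mu}_n \succeq \boldsymbol{\xi}^\text{d}(\bar{\tau}^*)$ for every $t$, so both constraints of (\ref{Eqn:OptimalResourceAlloc}) remain satisfied, while the objective varies affinely in $t$ with slope $\delta = \sum_n \beta_n \psi(\boldsymbol{\mu}_n)$.

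Third, I would invoke optimality to pin down the slope and then remove an atom. Since $\sum_n \beta_n = 0$, the vector $\boldsymbol{\beta}$ has entries of both signs, so $t$ can be increased in either direction until the first weight $\tau^\text{d}_n(t)$ reaches zero, with all weights staying nonnegative up to that crossing. If $\delta\neq 0$, moving $t$ in the direction that lowers the objective would yield a feasible point with objective strictly below the optimum, a contradiction; hence $\delta=0$. Consequently, moving $t$ toward either zero crossing produces a new feasible solution with the same (hence still optimal) objective value in which at least one weight vanishes, i.e.\ with one fewer active slot. Iterating this elimination terminates exactly when no nontrivial $\boldsymbol{\beta}$ can be found, namely when $N\leq K+1$, which establishes the claim.

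The part requiring the most care is the bookkeeping of the perturbation rather than any deep estimate: I must verify that a strictly positive step to the nearest zero crossing exists (guaranteed because $\sum_n\beta_n=0$ forces some $\beta_n>0$ and some $\beta_n<0$, together with the prior removal of zero-weight slots), and that each eliminated solution remains admissible, since the surviving vectors $\boldsymbol{\mu}_n$ and their values $\psi(\boldsymbol{\mu}_n)$ defined in (\ref{Eqn:FunctionPsi}) are left untouched. I would also record the underlying geometric picture: the aggregate point $\big(\sum_n\tau^\text{d}_n\boldsymbol{\mu}_n,\,\sum_n\tau^\text{d}_n\psi(\boldsymbol{\mu}_n)\big)$ lies in the convex hull of the graph $\{(\boldsymbol{\mu},\psi(\boldsymbol{\mu})):\boldsymbol{\mu}\succeq\boldsymbol{0}\}\subset\mathbb{R}^{K+1}$, and optimality of the last coordinate places it on a supporting hyperplane; Carath\'eodory's theorem applied within that $K$-dimensional face then yields the same $K+1$ bound, which is why the naive count $K+2$ is improved by one.
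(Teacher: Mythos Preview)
Your argument is correct and is essentially the same as the paper's: both exploit that, for fixed $\bar{\tau}$ and fixed atoms $\boldsymbol{\mu}_n$, problem (\ref{Eqn:OptimalResourceAlloc}) is a linear program in $\boldsymbol{\tau}^\text{d}$ with $K+1$ constraints, so an optimal $\boldsymbol{\tau}^\text{d}$ with at most $K+1$ nonzero entries exists. The paper simply cites the linear-programming vertex (basic feasible solution) fact, whereas you unroll that fact into an explicit Carath\'eodory-type support-reduction argument; the underlying mechanism is identical.
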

\begin{proof}
	Please refer to Appendix \ref{Appendix:Prop3}.
\end{proof}

Proposition~\ref{Theorem:MuProp3} provides an upper bound on the optimal number of time slots $N^*$ in the downlink phase.
Therefore, to find the optimal solution of (\ref{Eqn:OptimalResourceAlloc}), we set $N =\bar{N} = K + 1$ and determine the optimal $\bar{\tau}^* \in [0,1]$, ${\boldsymbol{\tau}^\text{d}}^* \in [0,1]^{\bar{N}}$, and $\boldsymbol{\mu}_n^*, n \in \{1,2,\cdots, \bar{N}\}$, as solution of the resulting optimization problem.

In the following, we exploit a grid search to solve (\ref{Eqn:OptimalResourceAlloc}).
First, we define a uniform grid $\mathcal{P}_\tau$ = $\{\bar{\tau}_1, \bar{\tau}_2, \cdots \bar{\tau}_{L_\tau}\}$ of size $L_\tau$, where $\bar{\tau}_p = \frac{p-1}{L_\tau - 1}, p\in\{1,2,\cdots L_\tau - 1\}$.
Next, we obtain the optimal ${\boldsymbol{\tau}^\text{d}}^*$ and $\boldsymbol{\mu}_n^*, n \in \{1,2,\cdots, \bar{N}\},$ for each $\bar{\tau} \in \mathcal{P}_\tau$, i.e., we solve the following optimization problem:
\begin{equation}
		\minimize_{\boldsymbol{\tau}^\text{d} \succeq \boldsymbol{0}, \boldsymbol{\mu}_1, \cdots, \boldsymbol{\mu}_{\bar{N}}} \; \sum_{n=1}^{\bar{N}} \tau^\text{d}_n \psi(\boldsymbol{\mu}_n) \quad
		\subjectto \; \sum_{n=1}^{\bar{N}} \tau^\text{d}_{n} {\mu}_{n,k} \geq {\xi}_k^\text{d}(\bar{\tau}), \; \sum_{n=1}^{\bar{N}} \tau^\text{d}_{n} = \bar{\tau}.
	\label{Eqn:RefOptimalResourceAlloc}
\end{equation}
To this end, we define a uniform grid $\mathcal{P}_\mu$ of size $L_\mu^K$.
Each element of $\mathcal{P}_\mu$ is a vector $\boldsymbol{\mu}_j = [{\mu}_{j,1}, {\mu}_{j,2}, \cdots, {\mu}_{j,K}]^\top$, $j~\in~\{1,2,\cdots, L_\mu^K\}$, that represents a unique combination of harvested powers at the user devices, i.e., $\boldsymbol{\mu}_i \neq \boldsymbol{\mu}_j, \forall i\neq j$.
We note that the harvested powers $\mu_{n,k}, n\in\{1,2,\cdots,\bar{N}\}, k\in\{1,2,\cdots, K\}$, are bounded, i.e., $\mu_{n,k} \in [0, \varphi_k(A_k^2)], \forall n, k$.
Therefore, to uniquely define vector $\boldsymbol{\mu}_j$, we set its elements to the values ${\mu}_{j,k} = \frac{i_k \phi_k(A_k^2)}{L_\mu - 1}$, where $i_k$ is the digit in position $k$ of the representation of $j-1$ in the numeral system with base $L_\mu$, $j\in\{1,2,\cdots L_\mu^K\}, k\in\{1,2,\cdots, K\}$, i.e., $i_k \in \{0,1,\cdots, L_\mu-1\}$ and $j = \sum_{k=1}^K i_k L_\mu^{K-k} + 1$.

Finally, we solve (\ref{Eqn:RefOptimalResourceAlloc}) on the grid $\mathcal{P}_\mu$, i.e., we optimally select the $K+1$ elements of $\mathcal{P}_\mu$ with non-zero\footnotemark\hspace*{0pt} normalized time durations.
\footnotetext{In our simulations, since the number of such elements may be smaller than $K+1$, we select $K+1$ elements of $\mathcal{P}_\mu$ which correspond to the largest normalized time durations.}
To this end, we solve the following linear optimization problem:
\vspace*{-7pt}
\begin{equation}
	\vspace*{-7pt}
	 \minimize_{\boldsymbol{\tau}^\text{d}\succeq \boldsymbol{0}} \; \sum_{j=1}^{L_\mu^K} \tau^\text{d}_j \psi(\boldsymbol{\mu}_j) \quad \subjectto \; \boldsymbol{M} \boldsymbol{\tau}^\text{d} \succeq \boldsymbol{\xi}^\text{d}(\bar{\tau}),\; \sum_{j=1}^{L_\mu^K} {\tau_j}^\text{d} = \bar{\tau},
	\label{Eqn:LinearProblemTau}
\end{equation}
\noindent where $\boldsymbol{M} = [\boldsymbol{\mu}_1, \boldsymbol{\mu}_2, \cdots, \boldsymbol{\mu}_{L_\mu^K}] \in \mathbb{R}^{K\times L_\mu^K}$.
Since problem (\ref{Eqn:LinearProblemTau}) is linear, it can be efficiently solved using a standard numerical optimization tool, such as CVX \cite{Grant2015}.
Finally, we choose $\bar{{\tau}}^* \in \mathcal{P}_\tau$ that yields the minimum transmit power in the downlink.
For the optimal $\bar{{\tau}}^*$, the optimal normalized lengths of the time slots ${\boldsymbol{\tau}}^\text{d*}$ and the corresponding vectors of harvested powers $\boldsymbol{\mu}_n^*$, $n~\in~\{1,2,\cdots, \bar{N}\},$ are obtained as the $\bar{N}$ non-zero elements of $\boldsymbol{\tilde{\tau}}^\text{d*}(\bar{{\tau}}^*)$ and the corresponding elements of $\mathcal{P}_\mu$, respectively, where $\boldsymbol{\tilde{{\tau}}}^\text{d*}(\bar{{\tau}})$ is the solution of (\ref{Eqn:LinearProblemTau}) for a given $\bar{{\tau}}$.

The algorithm for determining the optimal solution of problem (\ref{Eqn:OptimalResourceAlloc}) is summarized in Algorithm~1.
We note that the computational complexity of the algorithm is exponential in the number of users $K$ but does not depend on the number of transmit antennas $N_\text{t}$ employed at the BS.

\begin{algorithm}[!t]		
	\small				
	\linespread{1.45}\selectfont
	\SetAlgoNoLine%		
	\SetKwFor{Foreach}{for each}{do}{end}		
	Initialize: Grid sizes $L_{\mu}$ and $L_\tau$, channel covariance matrix $\boldsymbol{B} = (\boldsymbol{H} \boldsymbol{H}^H)^{\frac{1}{2}}$, required rate $R_k^\text{req}$, required power $p_k^\text{req}$, and initial energy $q_k$ at user $k$ for $k \in \{1,2,\cdots, K\}$.	\\	
	1. Create grid $\mathcal{P}_\mu$ and calculate $\psi(\cdot)$ on its vertices: \\
	\For{$j = 0$ {\upshape to} $L_{\mu}^K-1$}{
		1.1. Create vector $\boldsymbol{\mu}_j = [\mu_{j,1} \; \mu_{j,2}\; \cdots \; \mu_{j,K}]^\top$, where $\mu_{j,k} = \frac{i_k \, \phi_k(A_k^2)}{L_\mu - 1}$, $i_k \in \{0,1,\cdots, L_\mu\}$, and $j = \sum_{k=1}^K i_k L_\mu^{K-k}$ \\
		1.2. Calculate $\psi_{j} = \psi(\boldsymbol{\mu}_j)$ as solution of (\ref{Eqn:FunctionPsiProposition}) \\
	}
	2. Create grid $\mathcal{P}_\tau$ and solve (\ref{Eqn:LinearProblemTau}) for each point on the grid:\\
	\For{$p = 1$ {\upshape to} $L_{\tau}$}{
		2.1. Calculate the time sharing parameter $\bar{\tau}_p = \frac{p-1}{L_\tau - 1}$\\
		2.2. For $\bar{\tau} = \bar{\tau}_p$, determine ${\tilde{\boldsymbol{\tau}}^\text{d*}_p}$ as solution of (\ref{Eqn:LinearProblemTau}) and the corresponding value $\Psi^*_p = \boldsymbol{\psi}^\top {\tilde{\boldsymbol{\tau}}^\text{d*}_p}$  \\
	}
	3. Find $p^* = \argmin_p \Psi^*_p$, the ratio $\bar{\tau}^* = \bar{\tau}_{p^*}$, and the corresponding vectors ${\boldsymbol{\tau}^\text{d}}^*$ and $\boldsymbol{\mu}^*_n, n\in\{1,2,\cdots, \bar{N}\}$  \\
	\textbf{Output:} Optimal parameter $\bar{\tau}^*$, time slot lengths ${\boldsymbol{\tau}^\text{d}}^*$, and harvested powers $\boldsymbol{\mu}_n, n\in\{1,2,\cdots,\bar{N}\}$
	\caption{\strut Algorithm for determining the optimal solution of (\ref{Eqn:OptimalResourceAlloc}) }
	\label{Alg:OptimalResourceAllocation}
\end{algorithm}	 
	\subsubsection{Solution of Problem (\ref{Eqn:OptimalTransmitVectors})}
	\label{Section:OptimalVectors}
	In the following, for the optimal vectors $\boldsymbol{\mu}^*$ obtained in Section~\ref{Section:OptimalResourceAllocation}, we determine the optimal energy beamforming vectors $\boldsymbol{w}_n^*, n\in\{1,2,\cdots, \bar{N}\},$ as a solution of problem (\ref{Eqn:OptimalTransmitVectors}).
Due to the non-concavity of $\phi_k(\cdot), k\in\{1,2,\cdots, K\}$, problem (\ref{Eqn:OptimalTransmitVectors}) is non-convex.
Nevertheless, in the following, we show that problem (\ref{Eqn:OptimalTransmitVectors}) can be equivalently formulated as a convex optimization problem, whose solution has a computational complexity, which is polynomial in $N_\text{t}$.

Since problem (\ref{Eqn:OptimalTransmitVectors}) can be reformulated as a rank-constrained semidefinite optimization problem \cite{Shanin2021a}, in the following lemma, we first consider a class of convex semidefinite optimization problems with linear constraints and show that the solutions of such problems are low-rank.
The result of this lemma is then exploited for the solution of problem (\ref{Eqn:OptimalTransmitVectors}).
\begin{lemma}
	For any given vector $\boldsymbol{b} \in \mathbb{R}^K$, the optimal solution of the following convex optimization problem
	\vspace*{-7pt}
	\begin{equation}
		\vspace*{-5pt}
		\minimize_{\boldsymbol{X} \in \mathcal{S}^{N_\text{t}}_{+}} \; \text{\upshape{Tr}}\{\boldsymbol{X}\} \qquad
		\subjectto \; \boldsymbol{h}_k \boldsymbol{X} \boldsymbol{h}_k^H \geq b_k, \; \forall k\in\{1,2,\cdots,K\},
		\label{Eqn:LemmaProblem}
	\end{equation} 
	\noindent\hspace*{0pt}satisfies\footnotemark\hspace*{0pt} $\rank\{\boldsymbol{X}\}\leq~1$.
	\label{Theorem:Lemma}
\end{lemma}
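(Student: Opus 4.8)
The plan is to attack the convex problem (\ref{Eqn:LemmaProblem}) through its Lagrangian dual and the Karush--Kuhn--Tucker (KKT) conditions, because the rank of an optimal $\boldsymbol{X}$ is governed by the null space of the dual slack matrix. First I would check strict feasibility: taking $\boldsymbol{X} = c\,\boldsymbol{I}_{N_\text{t}}$ with $c$ large enough gives $\boldsymbol{h}_k \boldsymbol{X} \boldsymbol{h}_k^H = c\,\norm{\boldsymbol{h}_k}^2 > b_k$ for every $k$, so Slater's condition holds, strong duality applies, and the KKT conditions are necessary and sufficient for optimality. Forming the Lagrangian $L = \Tr{\boldsymbol{X}} - \sum_k \lambda_k(\boldsymbol{h}_k\boldsymbol{X}\boldsymbol{h}_k^H - b_k) - \Tr{\boldsymbol{Z}\boldsymbol{X}}$ with $\lambda_k \geq 0$ and $\boldsymbol{Z} \succeq \boldsymbol{0}$, and using $\boldsymbol{h}_k\boldsymbol{X}\boldsymbol{h}_k^H = \Tr{\boldsymbol{h}_k^H\boldsymbol{h}_k\boldsymbol{X}}$, stationarity in $\boldsymbol{X}$ forces the dual slack matrix $\boldsymbol{Z} = \boldsymbol{I}_{N_\text{t}} - \sum_{k=1}^{K}\lambda_k\boldsymbol{h}_k^H\boldsymbol{h}_k =: \boldsymbol{I}_{N_\text{t}} - \boldsymbol{M}(\boldsymbol{\lambda})$, with dual feasibility $\boldsymbol{M}(\boldsymbol{\lambda}) \preceq \boldsymbol{I}_{N_\text{t}}$.

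Complementary slackness for the conic constraint gives $\boldsymbol{Z}\boldsymbol{X}^{*} = \boldsymbol{0}$, so the range of any optimal $\boldsymbol{X}^{*}$ is contained in $\ker(\boldsymbol{Z})$ and hence $\rank\{\boldsymbol{X}^{*}\} \leq \dim\ker(\boldsymbol{Z})$. Since $\boldsymbol{M}(\boldsymbol{\lambda}) \preceq \boldsymbol{I}_{N_\text{t}}$, the kernel of $\boldsymbol{Z}$ is exactly the eigenspace of $\boldsymbol{M}(\boldsymbol{\lambda})$ associated with its largest eigenvalue, which equals one. I would also record that this eigenspace lies in $\text{span}\{\boldsymbol{h}_k^H : \lambda_k > 0\}$, since any $\boldsymbol{u}$ with $\boldsymbol{M}(\boldsymbol{\lambda})\boldsymbol{u} = \boldsymbol{u}$ satisfies $\boldsymbol{u} = \sum_k \lambda_k(\boldsymbol{h}_k\boldsymbol{u})\boldsymbol{h}_k^H$. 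Thus everything reduces to bounding the multiplicity of the top eigenvalue of $\boldsymbol{M}(\boldsymbol{\lambda}^{*})$ at the dual optimum.

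The crux of the lemma is therefore to show that this top eigenspace is at most one-dimensional, which is precisely the statement that the relaxation is tight and admits a rank-one optimizer. I would exploit the dual problem recovered above, $\max_{\boldsymbol{\lambda}\succeq\boldsymbol{0}}\boldsymbol{b}^{\top}\boldsymbol{\lambda}$ subject to $\lambda_{\max}(\boldsymbol{M}(\boldsymbol{\lambda}))\leq 1$, which coincides with the dual already established in Proposition~\ref{Theorem:MuProp2} after identifying $\boldsymbol{b}$ with $\boldsymbol{\rho}$ and noting that $\boldsymbol{M}(\boldsymbol{\lambda})$ and $\boldsymbol{B}\diag(\boldsymbol{\lambda})\boldsymbol{B}$ share their nonzero spectrum, so the constraint $\|\boldsymbol{B}\diag(\boldsymbol{\lambda})\boldsymbol{B}\|_2 \leq 1$ is the same as $\lambda_{\max}(\boldsymbol{M}(\boldsymbol{\lambda})) \leq 1$. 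A linear functional maximized over this spectrahedron is attained at a boundary point where the binding eigenvalue constraint is smooth, i.e., where the maximal eigenvalue of $\boldsymbol{M}(\boldsymbol{\lambda}^{*})$ is simple; there $\dim\ker(\boldsymbol{Z}) = 1$ and $\rank\{\boldsymbol{X}^{*}\}\leq 1$. The degenerate configurations, where the maximizer falls on a non-smooth face with a repeated top eigenvalue, I would handle separately, either by a perturbation of $\boldsymbol{b}$ combined with continuity of the optimal value, or by reducing a higher-rank optimizer through trace-preserving Hermitian moves inside $\ker(\boldsymbol{Z})$ that keep the active constraints satisfied until a rank-one matrix is reached. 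I expect this last step -- certifying that the rank collapses all the way to one rather than merely to the generic Pataki bound of order $\sqrt{K}$, using that the objective matrix is the full-rank identity while the constraint matrices $\boldsymbol{h}_k^H\boldsymbol{h}_k$ are rank one -- to be the main obstacle in the argument.
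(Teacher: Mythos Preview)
Your setup is exactly the paper's: Lagrangian, KKT stationarity giving the dual slack $\boldsymbol{Z}=\boldsymbol{I}_{N_\text{t}}-\sum_k\lambda_k\boldsymbol{h}_k^H\boldsymbol{h}_k$, complementary slackness $\boldsymbol{Z}\boldsymbol{X}^*=\boldsymbol{0}$, and the observation that $\rank\{\boldsymbol{X}^*\}$ is bounded by the multiplicity of the top eigenvalue of $\boldsymbol{M}(\boldsymbol{\lambda}^*)$. Where you diverge is in the step you flag as the main obstacle. The paper does not invoke spectrahedral smoothness, perturbation of $\boldsymbol{b}$, or trace-preserving rank reduction at all; it simply appeals to the randomness of the channel vectors $\boldsymbol{h}_k$ to assert that, with probability one, the largest eigenvalue of $\boldsymbol{\Delta}=\sum_k\gamma_k^*\boldsymbol{h}_k^H\boldsymbol{h}_k$ is simple, so $\rank\{\boldsymbol{Y}^*\}=N_\text{t}-1$ and Sylvester's inequality gives $\rank\{\boldsymbol{X}^*\}\le 1$ immediately. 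This buys brevity but makes the statement a genericity claim (almost surely over the fading realizations) rather than a deterministic one for every $\{\boldsymbol{h}_k\}$; your proposed route, if completed, would yield the stronger deterministic version, and the rank-reduction idea you sketch is in fact the standard way to do so for SDPs with rank-one constraint matrices.
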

\begin{proof}
	Please refer to Appendix~\ref{Appendix:LemmaProof}.
\end{proof}
\footnotetext{We note that if $b_k = 0, \forall k,$ the optimal solution satisfies $\rank\{\boldsymbol{X}\} = 0$. Otherwise, we have $\rank\{\boldsymbol{X}\} = 1$.}

Next, in the following proposition, we equivalently formulate (\ref{Eqn:OptimalTransmitVectors}) as a convex optimization problem.
\begin{proposition}
	\label{Theorem:MuProp4}
	The optimal energy beamforming vectors $\boldsymbol{w}_n^*, n\in\{1,2,\cdots, \bar{N}\}$, are given by $\boldsymbol{w}_n^* = \gamma_n \boldsymbol{v}_n$, where $\gamma_n$ and $\boldsymbol{v}_n$ are the dominant eigenvalue and the corresponding eigenvector of matrix $\boldsymbol{W}_n^*$, which is obtained as solution of the following convex optimization problem:
		\begin{equation}
				\minimize_{\boldsymbol{W}_n \in \mathcal{S}^{N_\text{t}}_{+}} \quad \text{\upshape{Tr}}\{\boldsymbol{W}_n\} \quad
				\subjectto \quad \boldsymbol{h}_k \boldsymbol{W}_n \boldsymbol{h}_k^H \geq \phi_k^{-1}(\mu^*_{n,k}), \; \forall k.
		\label{Eqn:OptimalVectorsRef}
		\end{equation} 
	\vspace*{-35pt}
\end{proposition}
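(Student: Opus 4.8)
The plan is to reformulate the non-convex problem (\ref{Eqn:OptimalTransmitVectors}) into the convex semidefinite program (\ref{Eqn:OptimalVectorsRef}) via semidefinite relaxation, and then to invoke Lemma~\ref{Theorem:Lemma} to certify that the relaxation is tight, so that a rank-one and hence feasible beamforming vector can be recovered from the relaxed optimum.

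First I would linearize the energy-harvesting constraints. Because each $\phi_k(\cdot)$ in (\ref{Eqn:EhModel}) is the minimum of a convex strictly increasing function $\varphi_k(\cdot)$ and the constant $\varphi_k(A_k^2)$, it is non-decreasing and strictly increasing on $[0,A_k^2]$; its inverse $\phi_k^{-1}(\cdot)=\varphi_k^{-1}(\cdot)$ is therefore well defined on $[0,\varphi_k(A_k^2)]$, which contains the feasible harvested powers $\mu^*_{n,k}$. Applying $\phi_k^{-1}$ to both sides of the constraint $\phi_k(|\boldsymbol{h}_k\boldsymbol{w}_n|^2)\geq\mu^*_{n,k}$ yields the equivalent quadratic constraint $|\boldsymbol{h}_k\boldsymbol{w}_n|^2\geq\phi_k^{-1}(\mu^*_{n,k})$; the equivalence holds precisely because $\phi_k$ is monotone (and, at the saturation value, the smallest admissible argument is $A_k^2=\varphi_k^{-1}(\varphi_k(A_k^2))$). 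Problem (\ref{Eqn:OptimalTransmitVectors}) is thus equivalent to minimizing $\norm{\boldsymbol{w}_n}^2$ subject to $|\boldsymbol{h}_k\boldsymbol{w}_n|^2\geq\phi_k^{-1}(\mu^*_{n,k})$ for all $k$, a QCQP with reverse-convex and hence non-convex constraints.

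Next I would lift the problem. Introducing $\boldsymbol{W}_n=\boldsymbol{w}_n\boldsymbol{w}_n^H$, I use $\norm{\boldsymbol{w}_n}^2=\Tr{\boldsymbol{W}_n}$ and $|\boldsymbol{h}_k\boldsymbol{w}_n|^2=\boldsymbol{h}_k\boldsymbol{W}_n\boldsymbol{h}_k^H$, so that the objective and all constraints become linear in $\boldsymbol{W}_n$, while the identity $\boldsymbol{W}_n=\boldsymbol{w}_n\boldsymbol{w}_n^H$ is encoded by $\boldsymbol{W}_n\in\mathcal{S}^{N_\text{t}}_{+}$ together with $\rank\{\boldsymbol{W}_n\}\leq1$. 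Dropping the rank constraint gives exactly the convex SDP (\ref{Eqn:OptimalVectorsRef}). The remaining and key step is to show that this relaxation is tight, which follows immediately from Lemma~\ref{Theorem:Lemma}: problem (\ref{Eqn:OptimalVectorsRef}) is precisely problem (\ref{Eqn:LemmaProblem}) with $b_k=\phi_k^{-1}(\mu^*_{n,k})$, so any optimal $\boldsymbol{W}_n^*$ satisfies $\rank\{\boldsymbol{W}_n^*\}\leq1$.

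Finally I would recover the beamformer. Since $\boldsymbol{W}_n^*$ is positive semidefinite of rank at most one, it admits the factorization $\boldsymbol{W}_n^*=\gamma_n^2\,\boldsymbol{v}_n\boldsymbol{v}_n^H$, with $\boldsymbol{v}_n$ the unit-norm dominant eigenvector and $\gamma_n^2$ the single non-zero eigenvalue; setting $\boldsymbol{w}_n^*=\gamma_n\boldsymbol{v}_n$ reproduces $\boldsymbol{W}_n^*=\boldsymbol{w}_n^*(\boldsymbol{w}_n^*)^H$, inherits feasibility from $\boldsymbol{W}_n^*$, and attains the relaxed optimum, so it is optimal for (\ref{Eqn:OptimalTransmitVectors}). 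The main obstacle is entirely contained in the tightness argument, which is why it is isolated into Lemma~\ref{Theorem:Lemma}; the only other point requiring care is justifying the inversion of the saturating nonlinearity $\phi_k$, i.e., verifying that $\phi_k^{-1}$ is well defined over the range of feasible harvested powers.
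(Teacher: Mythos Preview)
Your argument is correct and follows essentially the same route as the paper: invert the monotone $\phi_k$ to obtain quadratic constraints, lift to the rank-constrained SDP, drop the rank constraint, and invoke Lemma~\ref{Theorem:Lemma} to certify tightness before reading off $\boldsymbol{w}_n^*$ from the rank-one factorization. Your added care in justifying the well-definedness of $\phi_k^{-1}$ at saturation and in writing the eigenvalue as $\gamma_n^2$ (so that $\boldsymbol{w}_n^*(\boldsymbol{w}_n^*)^H$ truly equals $\boldsymbol{W}_n^*$) are welcome clarifications.
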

\begin{proof}
	Since function $\phi_k(\cdot)$ is monotonically non-decreasing, problem (\ref{Eqn:OptimalTransmitVectors}) can be equivalently reformulated as follows
	\begin{equation}
		\minimize_{\boldsymbol{W}_n \in \mathcal{S}^{N_\text{t}}_{+}} \;  \text{\upshape{Tr}}\{\boldsymbol{W}_n\} \quad
		\subjectto \; \boldsymbol{h}_k \boldsymbol{W}_n \boldsymbol{h}_k^H \geq \phi^{-1}(\mu^*_{n,k}), \; \forall k, \quad
		\rank\{  \boldsymbol{W}_n \} \leq 1.
		\label{Eqn:ProofProp4}
	\end{equation} 
   	 Problems (\ref{Eqn:OptimalVectorsRef}) and (\ref{Eqn:ProofProp4}) are equivalent thanks to Lemma~\ref{Theorem:Lemma}.
	 Then, the optimal energy beamforming vector $\boldsymbol{w}_n^*, n\in\{1,2,\cdots, \bar{N}\},$ as solution of (\ref{Eqn:OptimalTransmitVectors}) can be determined as $\boldsymbol{w}_n^* = \gamma_n \boldsymbol{v}_n$, where $\gamma_n$ and $\boldsymbol{v}_n$ are the dominant eigenvalue and the corresponding eigenvector of matrix $\boldsymbol{W}_n^*$, solution of (\ref{Eqn:ProofProp4}).
	 This concludes the proof.
\end{proof}

Proposition~\ref{Theorem:MuProp4} reveals that the optimal energy beamforming vectors $\boldsymbol{w}_n^*, n\in\{1,2,\cdots, \bar{N}\},$ can be determined as solution of convex optimization problem (\ref{Eqn:OptimalVectorsRef}).
We note that (\ref{Eqn:OptimalVectorsRef}) can be efficiently solved using numerical optimization tools, such as CVX \cite{Grant2015}.

The algorithm for determining the optimal solution of problem (\ref{Eqn:OriginalProblem}) is summarized in Algorithm~\ref{Alg:OptimalSolution}.
The computational complexity of the optimal WPCN design as function of $N_\text{t}$ and $K$ is given by $\Theta_{\text{Opt}}(N_\text{t}, K) =  \Theta_{\text{RA}}(N_\text{t}, K) + \Theta_{\text{BF}}(N_\text{t}, K)$.
Here, $\Theta_{\text{RA}}(N_\text{t}, K) = \mathcal{O}\big( L_{\mu}^K K^3 \big)$ and $\Theta_{\text{BF}}(N_\text{t}, K) = \mathcal{O}\big( K^2 N_\text{t}^{\frac{7}{2}} + K^3 N_\text{t}^{\frac{5}{2}} + K^4 N_\text{t}^{\frac{1}{2}} \big)$ are the computational complexities of the optimal resource allocation scheme in Algorithm~\ref{Alg:OptimalResourceAllocation} and the optimal energy beamforming design, i.e., the solution of (\ref{Eqn:OptimalVectorsRef}), respectively. Here, $\mathcal{O}(\cdot)$ denotes the big-O notation\footnotemark.
\footnotetext{The computational complexities of a linear program and a convex semidefinite problem that involve $n$ variables and $m$ constraints are given by $\mathcal{O}\big(m^3+nm^2+n^2 \big)$ and $\mathcal{O}\big(\sqrt{n}(mn^3+m^2n^2+m^3) \big)$, respectively \cite{Polik2010}.}
Thus, we conclude that the computational complexity of the optimal energy signal and resource allocation policy is exponential in the number of deployed users $K$ and polynomial in the number of transmit antennas $N_\text{t}$.
Hence, determining the optimal WPCN design may not be computationally efficient for multi-user systems with $K\gg1$.
Therefore, in the following section, we propose two suboptimal low-complexity schemes to solve optimization problem (\ref{Eqn:OriginalProblem}).

\begin{algorithm}[!t]		
	\small				
	\linespread{1.45}\selectfont
	\SetAlgoNoLine%		
	\SetKwFor{Foreach}{for each}{do}{end}		
	Initialize: Channel vectors $\boldsymbol{h}_1, \boldsymbol{h}_2, \cdots, \boldsymbol{h}_K$, required rates $R_k^\text{req}$, power $p_k^\text{req}$, and initial energy $q_k$ at user $k, k \in \{1,2,\cdots, K\}$.	\\	
	1. Find the channel covariance matrix $\boldsymbol{B} = (\boldsymbol{H} \boldsymbol{H}^H)^{\frac{1}{2}}$ \\
	2. Determine the optimal time allocation $\bar{{\tau}}^*$, ${\boldsymbol{\tau}^\text{d}}^*$, and harvested powers $\boldsymbol{\mu}^*_n, n\in\{1,2,\cdots, \bar{N}\},$ with Algorithm~1\\
	3. For the optimal harvested powers, determine the energy beamforming vectors $\boldsymbol{w}_n^*$ in Proposition~\ref{Theorem:MuProp4}\\
	\textbf{Output:} Energy beamforming vectors $\boldsymbol{w}_n^*, n\in \{1,2,\cdots, \bar{N}\}$, ratio $\bar{\tau}^*$, and time slot lengths $\boldsymbol{\tau}^*$
	\caption{\strut Optimal design of multi-user \gls*{wpcn}. }
	\label{Alg:OptimalSolution}
\end{algorithm}	 

\section{Low-Complexity Design of Multi-User WPCNs}
In this section, we propose two suboptimal low-complexity schemes for WPCN design.
To this end, we first consider asymptotic massive MISO WPCNs with $N_\text{t} \to \infty$ and determine the optimal transmit policy in the downlink in closed-form.
Next, based on this result, we propose an MRT-based scheme for WPCN design, which is optimal for massive MISO WPCNs with vanishing system loads and provides a suboptimal solution of (\ref{Eqn:OriginalProblem}) for general MISO WPCNs with finite system loads.
Finally, to improve the performance of this design for the general case, we derive a low-complexity suboptimal scheme that utilizes \gls*{sdr}.
\subsection{Massive MISO WPCNs}
\label{Section:MassiveMIMO}
In the following, we consider the optimal design of multi-user MISO WPCNs, where the channel vectors $\boldsymbol{h}_k, k\in\{1, 2, \cdots, K\},$ become orthogonal, i.e., $\boldsymbol{h}_i \boldsymbol{h}_j^H = 0, \forall i \neq j$, as the WPCN load approaches zero, i.e., $\frac{K}{N_\text{t}} \to 0$ \cite{Yang2015}.
We note that this property holds for, e.g., massive MISO WPCNs with Rayleigh fading channels as the number of antennas $N_\text{t}$ at the BS tends to infinity \cite{Yang2015, Interdonato2020}. 

To solve optimization problem (\ref{Eqn:OriginalProblem}), we first determine the optimal fraction $\bar{\tau}^*$ using a one-dimensional grid search as $\bar{\tau}^* = \argmin_{\bar{\tau} \in [0, 1]} P_\text{DL}^*(\bar{\tau})$, where
\vspace*{-7pt}	
\begin{equation}
	\vspace*{-7pt}
	P_\text{DL}^*(\bar{\tau}) = \min_{\mathcal{F}} \{P_{\text{DL}} \}
	\label{Eqn:ProblemRef1}
\end{equation}
with $\mathcal{F} = \{ \boldsymbol{\mathcal{W}}, \boldsymbol{\tau}^\text{d}: \sum_{n=1}^{\bar{N}} \tau^\text{d}_{n} \phi_k\big(|\boldsymbol{h}_k \boldsymbol{w}_n|^2\big) \geq \xi^\text{d}_k(\bar{\tau}), \forall k; \; \sum_{n=1}^{\bar{N}} \tau^\text{d}_{n} = \bar{\tau} \}$, $\boldsymbol{\mathcal{W}} = \{\boldsymbol{w}_1, \boldsymbol{w}_2, \cdots, \boldsymbol{w}_{\bar{N}} \}$, and $P_{\text{DL}} = \sum_{n=1}^{\bar{N}} \tau^\text{d}_{n} \, \| \boldsymbol{w}_n \|_2^2$.

Next, for a given $\bar{{\tau}}$, in the following proposition, we show that for massive MISO WPCNs, the optimal energy signal and time allocation policy as solution of (\ref{Eqn:RefProblem}) can be obtained in closed-form.
\begin{proposition}
	If channel vectors $\boldsymbol{h}_k, k\in\{1, 2, \cdots, K\},$ are orthogonal, i.e., $\boldsymbol{h}_i \boldsymbol{h}_j^H = 0, \forall i \neq j$, for a given $\bar{\tau}$, the $K$ energy beamforming vectors of the optimal energy signal in Proposition~\ref{Theorem:MuProp1} and the normalized lengths of the corresponding time slots are given by
	\begin{align}
		\boldsymbol{w}_n^* &= \sum_{i=n}^{K} A_{k_i} \frac{ \boldsymbol{h}_{k_i}^H }{ \|\boldsymbol{h}_{k_i}\|^2_2 }, \\
		{\tau^\text{\upshape d*}_{n}} &= \bar{{\tau}} \big( \bar{t}^{+}_{k_n}(\bar{\tau}) - \sum_{i=1}^{n-1} \bar{t}^{+}_{k_i}(\bar{\tau}) \big), n \in \{1,2, \cdots, K\},
		\label{Eqn:SolutionMassiveMimo}
	\end{align}
	\noindent respectively.
	Here, the elements of $\bar{\boldsymbol{t}}^{+}(\bar{\tau}) = [\bar{t}^{+}_{k_1}(\bar{\tau}), \bar{t}^{+}_{k_2}(\bar{\tau}), \cdots, \bar{t}^{+}_{k_K}(\bar{\tau})] \in \mathbb{R}_{+}^K$ are obtained by sorting the values $t_k^{+}(\bar{\tau}) = \max \big\{ 0, \frac{\xi^\text{\upshape d}_k(\bar{\tau})}{A_k^2} \big\}, k\in\{1,2,\cdots, K\}$, in ascending order.
	Furthermore, the $(K+1)^\text{th}$ energy beamforming vector is the all-zero vector, i.e., $\boldsymbol{w}^*_{K+1} = \boldsymbol{0}$, and its normalized length is given by $\tau^\text{d*}_{K+1} = \bar{\tau} - \sum_{n=1}^{K} \tau^\text{d*}_{n}$.
	Finally, the corresponding optimal transmit power in the downlink is given by $P^*_{\text{DL}}(\bar{{\tau}}) =  \sum_{k=1}^{K} \frac{ \xi^\text{d}_k(\bar{\tau}) }{ \|\boldsymbol{h}_{k}\|^2_2 }$.
	\label{Theorem:MassiveMIMO}
\end{proposition}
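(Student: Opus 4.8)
The plan is to turn the orthogonality $\boldsymbol{h}_i\boldsymbol{h}_j^H=0$ into an \emph{additive} separation of the cost function $\psi(\boldsymbol{\mu})$ and then solve the resulting decoupled version of (\ref{Eqn:OptimalResourceAlloc}) in closed form. First I would evaluate $\psi$ through its dual form in Proposition~\ref{Theorem:MuProp2}: orthogonality makes $\boldsymbol{H}\boldsymbol{H}^H$ diagonal, so $\boldsymbol{B}=(\boldsymbol{H}\boldsymbol{H}^H)^{\frac{1}{2}}$ is diagonal with entries $\norm{\boldsymbol{h}_k}$, the constraint $\norm{\boldsymbol{B}\diag(\boldsymbol{\lambda})\boldsymbol{B}}\le 1$ splits into $\lambda_k\norm{\boldsymbol{h}_k}^2\le 1$, and since $\rho_k=\phi_k^{-1}(\mu_k)\ge 0$ the maximization gives $\lambda_k^*=\norm{\boldsymbol{h}_k}^{-2}$, hence $\psi(\boldsymbol{\mu})=\sum_{k}\phi_k^{-1}(\mu_k)/\norm{\boldsymbol{h}_k}^2$. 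Equivalently, in (\ref{Eqn:FunctionPsi}) one may project $\boldsymbol{x}$ onto $\mathrm{span}\{\boldsymbol{h}_k^H\}$: because $|\boldsymbol{h}_k\boldsymbol{x}|^2$ then depends only on the $\boldsymbol{h}_k^H$-coefficient, the $K$ constraints decouple and the minimal squared norm separates additively.

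Substituting this $\psi$ into (\ref{Eqn:OptimalResourceAlloc}) for a fixed $\bar{\tau}$ makes the objective $\sum_k \norm{\boldsymbol{h}_k}^{-2}\sum_n\tau_n^\text{d}\phi_k^{-1}(\mu_{n,k})$, a sum of per-user costs tied together only by the shared durations $\{\tau_n^\text{d}\}$ and the budgets $\sum_n\tau_n^\text{d}\mu_{n,k}\ge\xi_k^\text{d}(\bar{\tau})$. For each user I would invoke the mechanism behind Proposition~\ref{Theorem:SingleUser}: since $\varphi_k$ is convex with $\varphi_k(0)=0$, the secant ratio $\phi_k^{-1}(\mu)/\mu$ is non-increasing on $[0,\varphi_k(A_k^2)]$, so the per-unit received-power cost of delivering harvested energy is minimized exactly at the saturation level $\mu=\varphi_k(A_k^2)$, i.e., at $|\boldsymbol{h}_k\boldsymbol{x}|^2=A_k^2$. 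This yields a per-user lower bound on $\sum_n\tau_n^\text{d}\phi_k^{-1}(\mu_{n,k})$ in terms of $\xi_k^\text{d}(\bar{\tau})$, and—thanks to the additive separation—summing over $k$ lower-bounds the entire downlink power. It also identifies the optimal single-user policy as On-Off transmission that drives user $k$ into saturation for a total normalized on-time governed by $t_k^+(\bar{\tau})=\max\{0,\xi_k^\text{d}(\bar{\tau})/A_k^2\}$.

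The achievability step is the crux, because each slot offers only one joint power vector $\boldsymbol{\mu}_n$ and one duration $\tau_n^\text{d}$, whereas the per-user On-Off policies generally have different on-times. The additive separation of $\psi$ is precisely what makes scheduling lossless: serving any subset of users simultaneously at saturation costs exactly the sum of their individual saturation costs, since orthogonality annihilates all cross-terms. I would therefore sort the users by required on-time into $\bar{t}^+_{k_1}\le\cdots\le\bar{t}^+_{k_K}$ and nest their on-intervals so that in slot $n$ precisely users $k_n,\dots,k_K$ are at saturation. By orthogonality this is realized by $\boldsymbol{w}_n^*=\sum_{i=n}^{K}A_{k_i}\boldsymbol{h}_{k_i}^H/\norm{\boldsymbol{h}_{k_i}}^2$, for which $|\boldsymbol{h}_{k_j}\boldsymbol{w}_n^*|^2=A_{k_j}^2$ when $j\ge n$ and $0$ otherwise; the sorted consecutive differences give the slot lengths in (\ref{Eqn:SolutionMassiveMimo}), and a final idle slot $\boldsymbol{w}_{K+1}^*=\boldsymbol{0}$ absorbs the residual duration so that $\sum_n\tau_n^\text{d}=\bar{\tau}$ holds, using at most $K+1$ slots in agreement with Proposition~\ref{Theorem:MuProp3}.

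Finally I would verify optimality by substitution: writing $\sum_n\tau_n^\text{d}\norm{\boldsymbol{w}_n^*}^2=\sum_n\tau_n^\text{d}\sum_{i\ge n}A_{k_i}^2/\norm{\boldsymbol{h}_{k_i}}^2$ and swapping the order of summation collapses the nested structure against the cumulative on-times, so that each user's contribution reduces to its budget term and the total attains the per-user lower bound of the second step; this gives the claimed closed form $P_{\text{DL}}^*(\bar{\tau})=\sum_k\xi_k^\text{d}(\bar{\tau})/\norm{\boldsymbol{h}_k}^2$ and certifies that the constructed signal solves (\ref{Eqn:RefProblem}) for the given $\bar{\tau}$, with $\bar{\tau}^*$ recovered by the outer one-dimensional search. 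I expect the main obstacle to be exactly the scheduling argument of the third paragraph—certifying that bundling users of unequal on-times into a common slot with a single $\boldsymbol{\mu}_n$ and $\tau_n^\text{d}$ incurs no excess power—which is special to the orthogonal regime and hinges entirely on the additive separability of $\psi$ established first.
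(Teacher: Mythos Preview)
Your proposal is correct and follows essentially the same route as the paper's proof: both first exploit orthogonality to diagonalize $\boldsymbol{B}$ in Proposition~\ref{Theorem:MuProp2} and obtain the additive decomposition $\psi(\boldsymbol{\mu})=\sum_k\phi_k^{-1}(\mu_k)/\norm{\boldsymbol{h}_k}^2$, then decouple (\ref{Eqn:OptimalResourceAlloc}) into $K$ independent single-user On-Off problems (as in Proposition~\ref{Theorem:SingleUser}) to derive the lower bound $P^*_{\text{DL}}(\bar{\tau})=\sum_k\xi^\text{d}_k(\bar{\tau})/\norm{\boldsymbol{h}_k}^2$, and finally verify that the nested-schedule construction in (\ref{Eqn:SolutionMassiveMimo}) attains this bound. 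Your treatment of the scheduling/achievability step is somewhat more explicit than the paper's (which simply asserts that the stated policy achieves $P^*_{\text{DL}}(\bar{\tau})$ and is feasible), but the underlying argument is the same.
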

\begin{proof}
	Please refer to Appendix~\ref{Appendix:Prop5}.
\end{proof}

\begin{figure}[t]
	\centering
	\includegraphics[draft=false, width=0.35\textwidth]{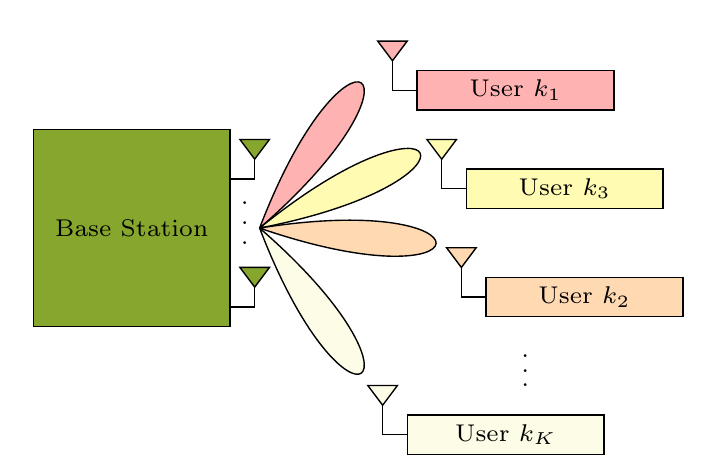}
	\includegraphics[draft=false, width=0.64\textwidth]{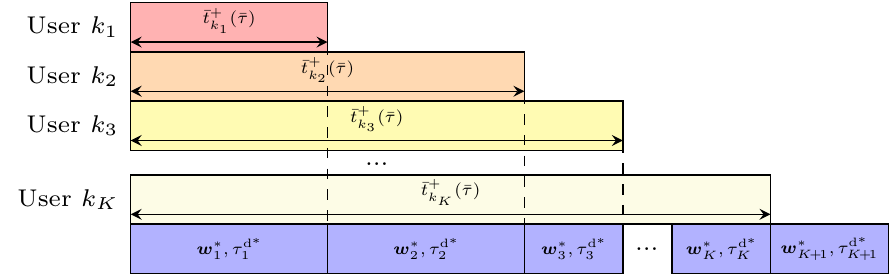}
	\caption{Optimal design of massive MISO WPCNs. The optimal energy beamforming vectors in the downlink are weighted sums of MRT beamformers, where the weights are chosen such that the corresponding EH circuits are driven into saturation.}
	\vspace*{-10pt}
	\label{Fig:OptimalMassiveMiso}
\end{figure}

Proposition~\ref{Theorem:MassiveMIMO} reveals that for multi-user massive MISO WPCNs, the optimal energy signal comprises a sequence of weighted sums of MRT beamforming vectors and can be obtained in closed-form.
Furthermore, similar to the single-user WPCN in Section~\ref{Section:SingleUser}, the magnitudes of the MRT beamforming vectors in $\boldsymbol{w}_n^*, n \in \{1,2,\cdots, \bar{N}\},$ are chosen to drive the corresponding EH circuits at the user devices into saturation.
Thus, in the first time slot, it is optimal to drive all EH circuits into saturation.
The corresponding normalized time slot duration depends on the power required by the least demanding user $k_1$, i.e., the user with the minimum normalized required power $\frac{\xi^\text{d}_k(\bar{\tau})}{A_k^2}, k\in\{1,2,\cdots, K\}$.
Furthermore, since the channel vectors $\boldsymbol{h}_k, k\in\{1, 2, \cdots, K\},$ are orthogonal, for the optimal resource allocation policy, in each subsequent time slot, it is optimal to switch-off the power transmission to the user whose power requirement is already satisfied, cf. Fig.~\ref{Fig:OptimalMassiveMiso}.
Finally, in the $(K+1)^\text{th}$ time slot, all constraints in (\ref{Eqn:RefProblem}) are satisfied and, hence, the corresponding optimal beamforming vector is the all-zero vector.
We note that since the time reserved for the last time slot can be utilized for uplink information transmission, for the optimal $\bar{\tau}^*$, we have $\tau^\text{d*}_{K+1} = 0$.

\begin{remark}
	The optimal downlink energy signal for massive MISO systems as solution of problem (\ref{Eqn:OriginalProblem}) may not be unique.
	For example, the order in which $\boldsymbol{w}^*_1, \boldsymbol{w}^*_2, \cdots, \boldsymbol{w}^*_{K+1}$ are transmitted can be chosen arbitrarily.
	Furthermore, the optimal normalized durations of the time slots may differ from those in Proposition~\ref{Theorem:MassiveMIMO} if, e.g., the time slot of user $k_1$ is split into several parts to serve this user in multiple non-consecutive time slots.
\end{remark}

In the following, we exploit Proposition~\ref{Theorem:MassiveMIMO} for the derivation of suboptimal WPCN designs.
In particular, we first propose a closed-form suboptimal MRT-based WPCN design that is optimal for massive MISO WPCNs with vanishing system loads and is a feasible point of optimization problem (\ref{Eqn:OriginalProblem}) for multi-user MISO WPCNs with finite system loads.
Next, exploiting \gls*{sdr}, we derive a low-complexity suboptimal scheme that is based on the MRT-based WPCN design and yields a scheme with reduced average transmit power at the BS required in the general case.

\subsection{Suboptimal MRT-based Scheme}
In this section, based on the optimal energy beamforming vectors for massive MISO WPCNs with vanishing system loads in Proposition~\ref{Theorem:MassiveMIMO}, we propose a suboptimal MRT-based WPCN design that is optimal for massive MISO WPCNs and solves (\ref{Eqn:OriginalProblem}) for general multi-user WPCNs.
We note that the scheme in Proposition~\ref{Theorem:MassiveMIMO} may not be a feasible solution of (\ref{Eqn:OriginalProblem}) for the general case since the energy beamforming vectors in Proposition~\ref{Theorem:MassiveMIMO} may not be able to drive the corresponding EH circuits into saturation if the channel vectors are not orthogonal.
Thus, in the following, we adapt the energy beamforming vectors in Proposition~\ref{Theorem:MassiveMIMO} to the case of general WPCNs with finite system loads.

To this end, for a given $\bar{{\tau}}$, we first determine the energy beamforming vectors ${\boldsymbol{w}}_n^*$ and the normalized lengths of the corresponding time slots $\tau^\text{d*}_{n}, n \in \{1,2,\cdots, K+1\}$, as in Proposition~\ref{Theorem:MassiveMIMO}.
Next, to drive the EH circuits of users $k \in \mathcal{K}(n) = \{k_n, k_{n+1}, \cdots, k_K\}$ into saturation, in time slot $n$, we adopt a weighted energy beamforming vector $\tilde{\boldsymbol{w}}_n^* = \omega_n {\boldsymbol{w}}_n^*$.
We propose to choose the weights $\omega_n, n\in\{1,2,\cdots,\bar{N}\},$ as follows:
\vspace*{-5pt}
\begin{equation}
	\vspace*{-5pt}
	\omega_n = \max_{k'\in \mathcal{K}(n)}  \frac{ A_{k'} }{| \boldsymbol{h}_{k'} \boldsymbol{w}^*_n |}.
\end{equation}
\noindent We note that since $\boldsymbol{w}^*_n$ in Proposition~\ref{Theorem:MassiveMIMO} is a weighted sum of random channel vectors $\boldsymbol{h}_{k'}$ of users $k' \in \mathcal{K}(n)$, we have $| \boldsymbol{h}_{k'} \boldsymbol{w}^*_n | > 0$ and $\omega_n < \infty, \forall n$, with probability $1$.
Furthermore, in the case of orthogonal channel vectors, we have $\omega_n = 1, \forall n\in\{1,2,\cdots, K\},$ and $\tilde{\boldsymbol{w}}_n^* = \boldsymbol{w}_n^* $.
The algorithm proposed to obtain the suboptimal MRT-based WPCN design is summarized in Algorithm~\ref{Alg:InitialPoint}.

In the proposed suboptimal MRT-based scheme, similar to Proposition~\ref{Theorem:MassiveMIMO}, energy beamforming vector $\tilde{\boldsymbol{w}}_n^*$ in time slot $n$, $n\in\{1,2,\cdots, \bar{N}\},$ is chosen to drive the EH circuits of the users in $\mathcal{K}(n)$ into saturation and, thus, the proposed scheme is a feasible solution of (\ref{Eqn:OriginalProblem}).
However, this scheme may not be efficient for general multi-user WPCNs with finite system loads since the EH circuits at the user devices may be driven too deep into saturation, i.e., $| \boldsymbol{h}_{k} \tilde{\boldsymbol{w}}_n^* | ^2 > A_k^2$ for some $n\in \{1,2,\cdots, K\}$ and $k\in \{1,2,\cdots, \bar{N}\},$ and thus, the constraints in  (\ref{Eqn:ProblemC1}) and (\ref{Eqn:ProblemC2}) may not be tight.
Therefore, to cope with this issue, in the next section, we propose a low-complexity suboptimal scheme that is initialized by the solution attained with the suboptimal MRT-based scheme and yields a scheme with reduced average transmit power in the energy transfer phase for general WPCNs.

\begin{algorithm}[!t]	
	\small				
	\SetAlgoNoLine%		
	\SetKwFor{Foreach}{for each}{do}{end}		
	Initialize: Required rates $R_k^\text{req}$, powers $p_k^\text{req}$, initial energies $q_k, \forall k$, error tolerance $\epsilon_{\tau}$	\\	
		1. Set initial value $\bar{\tau} = 0$, $i = 1$.\\
	\Repeat{$\bar{\tau} \geq \bar{\tau}^\text{\upshape{max}}$}{		
		2. Determine vector $\boldsymbol{t}^{+}(\bar{\tau})$ as in Proposition~\ref{Theorem:MassiveMIMO}\\
		3. Find $\boldsymbol{w}_n^*$ and $\tau_n^\text{d*}$, $n\in\{1,2,\cdots, \bar{N}\},$ as in Proposition~\ref{Theorem:MassiveMIMO}\\
		4. Obtain $\tilde{\boldsymbol{w}}_n = \omega_n \boldsymbol{w}_n^*, n\in\{1,2,\cdots, \bar{N}\},$ with $\omega_n = \max_{k'\in \mathcal{K}(n)}  \frac{ A_{k'} }{| \boldsymbol{h}_{k'} \boldsymbol{w}_n |}$ \\
		5. Save $\tilde{\boldsymbol{w}}_n$ and $\tau_n^\text{d*}$, $n\in\{1,2,\cdots, \bar{N}\},$ in $\boldsymbol{\mathcal{W}}_i$ and $\boldsymbol{\tau}^\text{d}_i$, respectively\\
		6. Calculate $\eta_{i} = \sum_{n=1}^{\bar{N}} \tau^\text{d*}_{n} \, \| \tilde{\boldsymbol{w}}^*_n \|_2^2$\\
		7. Set $\bar{\tau} = \bar{\tau} + \epsilon_{\tau}$, $i = i+1$
	}
	8. Find index $i^*$ yielding the minimum value in $\boldsymbol{\eta} = [\eta_1, \eta_2, \cdots, \eta_{i-1}]$ \\
	\textbf{Output:} 
	$\tilde{\boldsymbol{w}}_n, \tau_n^\text{d*}, \forall n \in \{1,2,\cdots, \bar{N}\},$ from $\boldsymbol{\mathcal{W}}_{i^*}$ and $\boldsymbol{\tau}_{i^*}$, respectively, and $\bar{\tau}^* = \epsilon_{\tau}(i^*-1)$
	\caption{\strut Algorithm for Suboptimal MRT-based Scheme}
	\label{Alg:InitialPoint}
\end{algorithm}	
\subsection{Suboptimal SDR-based Scheme}
Since the MRT-based WPCN design obtained with Algorithm~\ref{Alg:InitialPoint} may be severely suboptimal if the system load $\frac{K}{N_\text{t}}$ does not tend to $0$, in this section, we improve the performance of the proposed suboptimal design for the general case.
In \cite{Shanin2021d}, we developed an iterative algorithm that converges to a stationary point of (\ref{Eqn:OriginalProblem}).
However, since the complexity of this algorithm may be unreasonably high, in this work, we propose a low-complexity solution of (\ref{Eqn:OriginalProblem}) that is based on \gls*{sdr} and the MRT-based scheme, which is optimal for massive MISO WPCNs with vanishing system loads.

Since the beamforming vectors $\tilde{\boldsymbol{w}}_n^*, n\in\{1,2,\cdots, \bar{N}\}$, obtained with Algorithm~\ref{Alg:InitialPoint} may drive the EH circuits at the user devices too deep into saturation, i.e., $|\boldsymbol{h}_k \tilde{\boldsymbol{w}}_n^*|^2 \gg A_k^2$ for some $k\in\{1,2,\cdots, K\}$ and $n \in \{1,2,\cdots, \bar{N}\}$, we propose to modify energy beamforming vectors $\tilde{\boldsymbol{w}}_n^*$ such that the modified vectors $\hat{\boldsymbol{w}}_n^*, n\in\{1,2,\cdots, \bar{N}\},$ provide the same harvested powers to the users, but avoid the waste of energy and thus, reduce the average transmit power at the BS.
To this end, in time slot $n$, we first determine the harvested powers $\tilde{\mu}_{n,k} = \phi_k(|\boldsymbol{h}_k \tilde{\boldsymbol{w}}_n^*|^2)$ that are achieved at user devices $k \in \{1,2,\cdots, K\}$ with the suboptimal MRT-based scheme.
Next, we determine new energy beamforming vectors $\hat{\boldsymbol{w}}_n^*, n\in\{1,2,\cdots, \bar{N}\},$ such that $\| \hat{\boldsymbol{w}}_n^*\|_2 \leq \| \tilde{\boldsymbol{w}}_n^*\|_2$ and $\phi_k(|\boldsymbol{h}_k \hat{\boldsymbol{w}}_n^*|^2) \geq \tilde{\mu}_{n,k}, \forall n, k$.
To this end, in time slot $n$, we solve the following optimization problem:
\begin{equation}
	\minimize_{\hat{\boldsymbol{W}}_n \in \mathcal{S}^{N_\text{t}}_{+}} \quad \text{\upshape{Tr}}\{\hat{\boldsymbol{W}}_n\} \quad
	\subjectto \quad \boldsymbol{h}_k \hat{\boldsymbol{W}}_n \boldsymbol{h}_k^H \geq \phi_k^{-1}(\tilde{\mu}_{n,k}), \; \forall k.
	\label{Eqn:VectorsSub2}
\end{equation} 
Since (\ref{Eqn:VectorsSub2}) is a convex semidefinite optimization problem, it can be solved with a numerical optimization tool, such as CVX \cite{Grant2015}.
Furthermore, similar to problem (\ref{Eqn:OptimalVectorsRef}), the optimal solution $\hat{\boldsymbol{W}}_n^*$ of problem (\ref{Eqn:VectorsSub2}) satisfies $\rank\{\hat{\boldsymbol{W}}_n^*\} \leq 1, \forall n \in\{1,2,\cdots, \bar{N}\}$, cf. Lemma~\ref{Theorem:Lemma}.
Hence, the beamforming vector in time slot $n, n \in \{1,2,\cdots, \bar{N}\}$, can be obtained as $\hat{\boldsymbol{w}}_n^* = \hat{\gamma}^*_n \hat{\boldsymbol{v}}^*_n$, where $\hat{\gamma}^*_n$ and $\hat{\boldsymbol{v}}^*_n$ are the dominant eigenvalue and the corresponding eigenvector of matrix $\hat{\boldsymbol{W}}^*$, respectively, see Proposition~\ref{Theorem:MuProp4}.

The resulting suboptimal SDR-based design is summarized in Algorithm~\ref{Alg:SubScheme2}.
We note that since the suboptimal MRT- and SDR-based designs provide identical harvested powers at the user devices, the solution obtained with Algorithm~\ref{Alg:SubScheme2} is also a feasible point of problem (\ref{Eqn:OriginalProblem}).
Furthermore, since problem (\ref{Eqn:VectorsSub2}) is similar to (\ref{Eqn:OptimalVectorsRef}), the computational complexity of the suboptimal SDR-based scheme is given by $\Theta_{\text{Sch.2}}(N_\text{t}, K) = \Theta_{\text{BF}}(N_\text{t}, K) = \mathcal{O}\big( K^2 N_\text{t}^{\frac{7}{2}} + K^3 N_\text{t}^{\frac{5}{2}} + K^4 N_\text{t}^{\frac{1}{2}} \big).$
Thus, the computational complexity of the proposed algorithm is polynomial in the number of users $K$ and the number of transmit antennas $N_\text{t}$ at the BS.

\begin{algorithm}[!t]	
	\small				
	\SetAlgoNoLine%		
	\SetKwFor{Foreach}{for each}{do}{end}		
	Initialize: Channel vectors $\boldsymbol{h}_1, \boldsymbol{h}_2, \cdots, \boldsymbol{h}_K$, required rates $R_k^\text{req}$, powers $p_k^\text{req}$, and initial energy $q_k$ at user $k, k \in \{1,2,\cdots, K\}$.	\\	
	1. Find $\tilde{\boldsymbol{w}}_n, \tau_n^{d*}, n\in\{1,2,\cdots, \bar{N}\}$, and $\bar{{\tau}}^*$ with Algorithm~\ref{Alg:InitialPoint} \\
	2. Determine the optimal matrices $\hat{\boldsymbol{W}}^*_n, n\in\{1,2,\cdots, \bar{N}\},$ as solutions of (\ref{Eqn:VectorsSub2})\\
	3. Obtain $\hat{\boldsymbol{w}}_{n}^* = \hat{\gamma}^*_n \hat{\boldsymbol{v}}_{n}^*, \forall n$\\
	\textbf{Output:} 
	$ \bar{\tau}^*$, $\tau_n^{d*}$, $\hat{\boldsymbol{w}}_{n}^*, \forall n$
	\caption{\strut Algorithm for Suboptimal SDR-based Scheme. }
	\label{Alg:SubScheme2}
\end{algorithm}

\section{Numerical Results}
\label{Section:SimulationResults}
In this section, we evaluate the performance of the proposed schemes for WPCN design via numerical simulations.
First, we discuss the adopted system setup. 
Next, we study the complexity of the proposed optimal and suboptimal schemes.
Finally, we analyze the performances of the proposed schemes and compare them with baseline schemes.
\subsection{Simulation Setup}
\begin{table*}[!t]
	\centering
	\caption{Simulation parameters}
	\begin{tabular}{|c|l|c|l|}
		\hline \\[-1.5em]
		\multicolumn{1}{|c|}{EH model in (\ref{Eqn:EhModel})} & \multicolumn{3}{|c|}{$\varphi_k(|z|^2) = \lambda \Big[\mu^{-1} W_0\Big( \mu \exp(\mu) I_0 \Big( \nu \sqrt{2 |z|^2} \Big) - 1 \Big)^2\Big], \; \forall k$}  \\
		\hline
		\makecell{Parameters of \\ EH model} & \makecell[l]{$\mu=0.03$, $\nu = 2.4\cdot10^3$, \\ $\lambda = 10^{-10}$, $A_k^2 = \SI{0.4}{\milli\watt}, \forall  k$}  & 
		\makecell{Grid sizes for Algorithm 1 } & \makecell[l]{$L_{\mu} = 10$ \\$L_\tau = 100$ }
		\\
		\hline
		\makecell{Distance} & \makecell[l]{In Fig.~\ref{Fig:Results_Complexity}: $d_k = \SI{3}{\meter}, \; \forall k$ \\ In Fig.~\ref{Fig:Results_PowersRates}: $d_1 = \SI{3}{\meter}, d_2 = \SI{5}{\meter}, d_3 = \SI{7}{\meter}$ \\ In Fig.~\ref{Fig:Results_K_Nt}: $d_k \in [\SI{3}{\meter}, \SI{10}{\meter}], \; \forall k$}  & 
		\makecell{Step size for Algorithm 3 \\ Initial energy} & \makecell[l]{$\epsilon_{\tau} = 10^{-2}$ \\ $q_k = \SI{0}{\joule},\; \forall k$} 
		\\
		\hline
	\end{tabular}
	\label{Table:SimulationSetup}
\end{table*}
We set the path loss between the BS and each user device as $\big(\frac{c_l}{4 \pi d_k f_c}\big)^2$, where $c_l$ is the speed of light, $f_c = \SI{868}{\mega\hertz}$ is the carrier frequency, and $d_k$ is the distance between the BS and user $k \in \{1,2,\cdots, K\}$.
Next, for information transmission in the uplink, we assume a noise variance of $\sigma_k^2 = \SI{-120}{\dBm}$.
We assume that the channel gains in $\boldsymbol{h}_k, k\in\{1,2,\cdots,K\},$ follow independent Rayleigh distributions.
Additionally, we assume that all user devices are equipped with identical EH circuits and for the EH model $\phi_k(\cdot)$ in (\ref{Eqn:EhModel}), we adopt the non-linear function $\varphi_k(|z|^2) = \lambda \Big[\mu^{-1} W_0\Big( \mu \exp(\mu) I_0 \Big( \nu \sqrt{2 |z|^2} \Big) - 1 \Big)^2\Big]$ derived in \cite{Morsi2019} for a half-wave rectifier with a single diode \cite{Tietze2012}.
Here, $\lambda, \mu$, and $\nu$ are parameters that depend on the circuit elements but not on the received signal, whereas $W_0(\cdot)$ and $I_0(\cdot)$ are the principle branch of the Lambert-W function and the modified Bessel function of the first kind and order zero, respectively.
The simulation parameters are summarized in Table~\ref{Table:SimulationSetup}.
All numerical results are averaged over 1000 random channel realizations.
\subsection{Complexity Analysis}
\begin{figure}[!t]
	\centering
	\includegraphics[draft=false, width=0.45\textwidth]{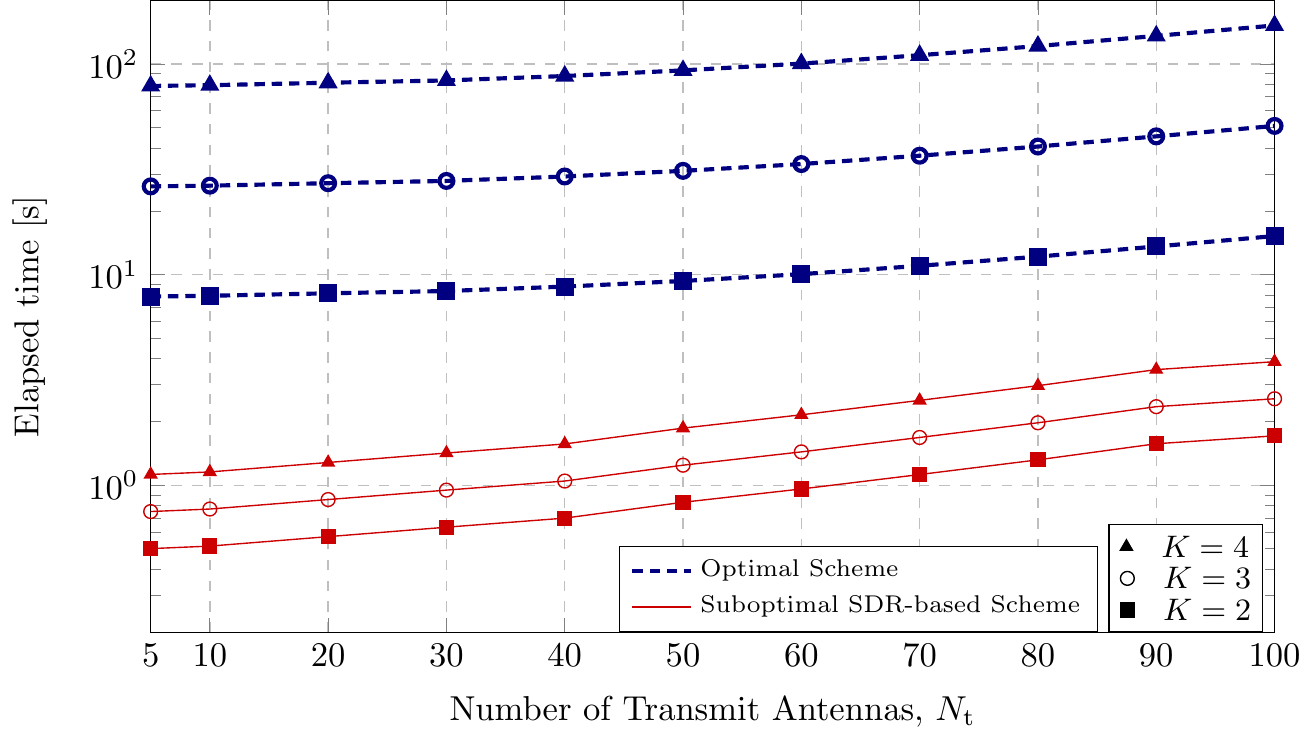}
	\vspace*{-10pt}
	\caption{Computational complexities of the optimal scheme and suboptimal SDR-based design for different values of $N_\text{t}$ and $K$.}
	\label{Fig:Results_Complexity}
	\vspace*{-10pt}
\end{figure}

First, we study the complexities of the proposed algorithms.
In Fig.~\ref{Fig:Results_Complexity}, we compare the times required to determine the energy signals with the optimal scheme and the suboptimal SDR-based scheme.
For the simulations, we adopt $p^\text{req}_k = \SI{0}{\watt}$ and the required per-user data rate is selected randomly from the interval $R_k \in [1,5] \SI{}{\frac{\bit}{channel\,use}}, \forall k$. 
In Fig.~\ref{Fig:Results_Complexity}, we observe that the computational complexities of both considered schemes grow with the number of users $K$.
Furthermore, we note that the time required for determining the optimal solution of (\ref{Eqn:OriginalProblem}) is strongly dominated by the exponential complexity in $K$.
In contrast, the suboptimal WPCN design obtained with Algorithm~\ref{Alg:SubScheme2} entails a computational complexity that is polynomial in both $K$ and $N_\text{t}$.
Thus, we observe that for any value of $K$, the suboptimal SDR-based scheme entails a significantly lower complexity than the optimal WPCN design.
Moreover, although the computational complexity of the optimal and suboptimal SDR-based WPCN designs may be high for large values of $N_\text{t}$, we note that in this case, both schemes may not provide substantial gains over the suboptimal MRT-based scheme which is optimal for massive MISO WPCNs with vanishing system loads and whose computational complexity is very small.

\subsection{Performance Analysis}
\begin{figure}[!t]
\centering
	\subfigure[Comparison for different per-user required powers $p^\text{req}$]{
		\includegraphics[draft=false, width=0.47\textwidth]{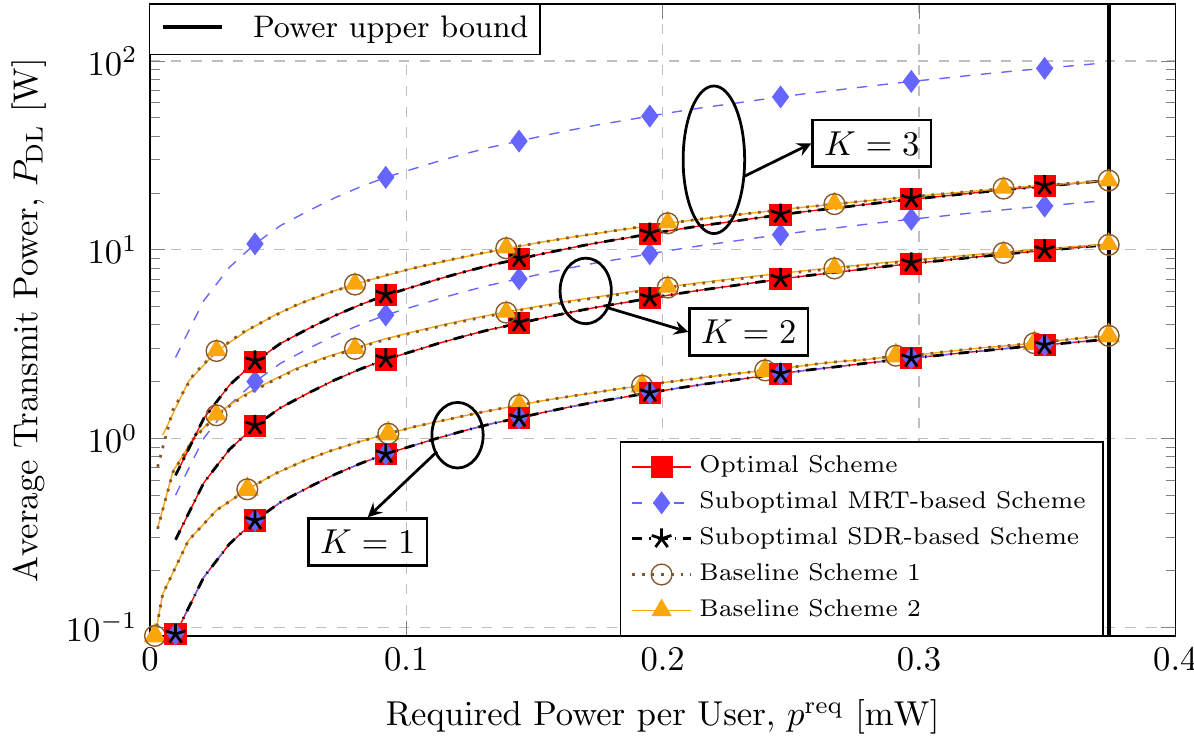} \label{Fig:Results_Powers}}
	\subfigure[Comparison for different per-user required rates $R^\text{req}$]{
		\includegraphics[draft=false, width=0.47\textwidth]{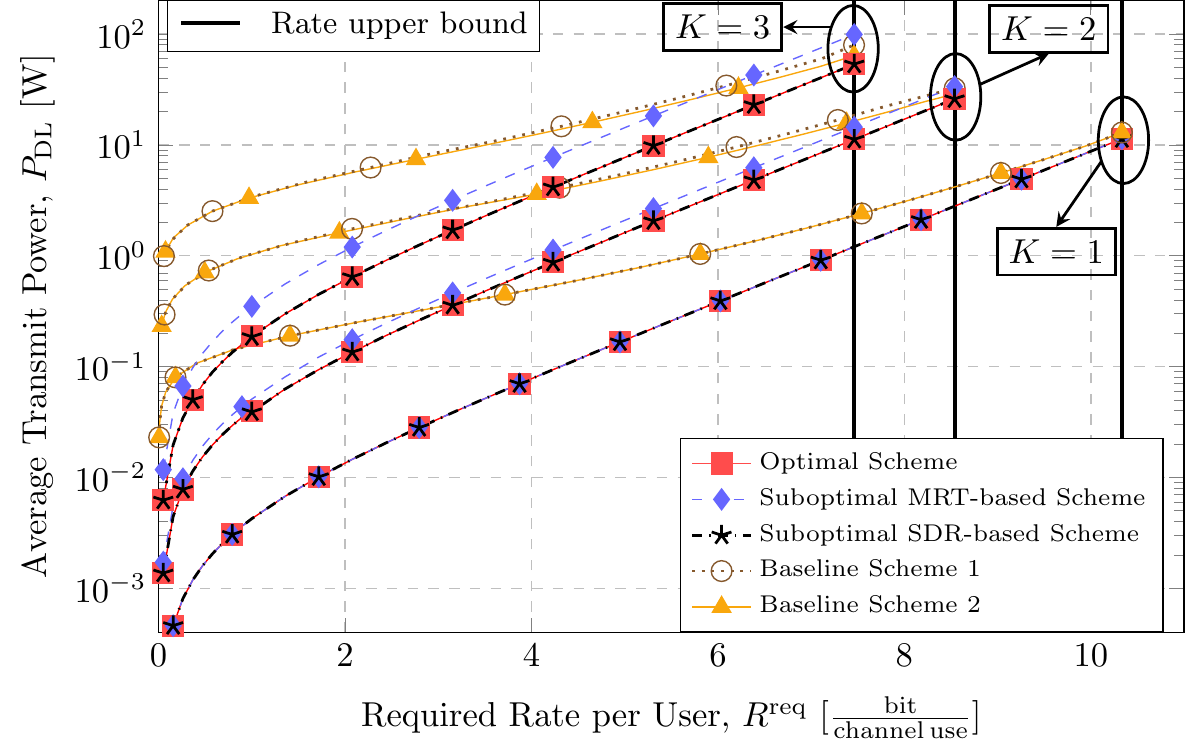}
		\label{Fig:Results_Rates}}
	\vspace*{-5pt}
		\caption{Average transmit powers $P_\text{DL}$ for different values of per-user required powers $p^\text{req}$ and rates $R^\text{req}$.}
\label{Fig:Results_PowersRates}
	\vspace*{-7pt}
\end{figure}

In the following, we analyze the performances of the proposed energy signal designs for WPCNs.
To this end, in Figs.~\ref{Fig:Results_Powers} and \ref{Fig:Results_Rates}, we compare the average transmit powers $P_\text{DL}$ obtained for different numbers of users $K$ as functions of the required per-user power, $p^\text{req} = p_k^\text{req}, \; \forall k,$ and the required per-user rate, $R^\text{req} = R_k^\text{req}, \; \forall k$, respectively.
We adopt $N_\text{t} = 5$ antennas at the BS and consider WPCNs with $K=1$, $K=2$, and $K=3$ user devices.
We set the distances between the BS and user devices to $d_1 = \SI{3}{\meter}$, $d_2 = \SI{5}{\meter}$, and $d_3 = \SI{7}{\meter}$, respectively.
Furthermore, for the results in Figs.~\ref{Fig:Results_Powers} and \ref{Fig:Results_Rates}, we adopt $R_k^\text{req} = \SI{0}{\frac{\bit}{channel\,use}}$ and $p_k^\text{req} = \SI{0}{\watt}$, $\forall k$, respectively.

As Baseline Scheme 1 and Baseline Scheme 2, we adopt the WPCN design obtained by solving optimization problem (\ref{Eqn:OriginalProblem}) with a linear and sigmoidal EH models at the user devices, utilizing the algorithms reported in \cite{Liu2014} and \cite{Boshkovska2018}, respectively.
For the linear EH model in Baseline Scheme 1, we adopt energy conversion efficiency $\eta_k = \frac{\phi_k(A_k^2)}{A_k^2}, \forall k$, whereas for the sigmoidal model in Baseline Scheme 2, we determine the model parameters to accurately match the adopted function $\phi_k(\cdot), \forall k$, as proposed in \cite{Boshkovska2015}.
We note that, since both baseline EH models are designed to characterize the average harvested power at the user devices, the related signal designs provide covariance matrices $\tilde{\boldsymbol{X}} = \sum \frac{\tau_n}{\bar{\tau}} \boldsymbol{x}_n \boldsymbol{x}_n^H$ and assume Gaussian energy signal vectors $\boldsymbol{x}_n$.
Therefore, as in \cite{Liu2014}, for the baseline schemes, we could adopt $\tilde{N} = \rank \{\tilde{\boldsymbol{X}}\}$ transmit symbols vectors that are obtained from the $\tilde{N}$ dominant eigenvectors of $\tilde{\boldsymbol{X}}$.
However, in our extensive simulations, we always observed $\rank \{\tilde{\boldsymbol{X}}\} = 1$ and, thus, for both baseline schemes, we adopt a single transmit vector in the downlink $\boldsymbol{x}_1 = \tilde{\lambda}_1 \tilde{\boldsymbol{v}}_1$, where $\tilde{\lambda}_1$ and $\tilde{\boldsymbol{v}}_1$ are the non-zero eigenvalue and the corresponding eigenvector of $\tilde{\boldsymbol{X}}$, respectively.
Furthermore, since the baseline schemes may not be able to provide the required harvested powers and uplink rates if EH model in Table~\ref{Table:SimulationSetup} is adopted, for the baseline schemes, we plot in Figs.~\ref{Fig:Results_Powers} and \ref{Fig:Results_Rates} the average transmit powers for the achieved harvested powers and user rates, respectively.

First, we observe in Figs.~\ref{Fig:Results_Powers} and \ref{Fig:Results_Rates} that for any number of users $K$, the average transmit power $P_\text{DL}$ in the downlink increases with the required power $p^\text{req}$ and the required rate $R^\text{req}$.
Next, as expected, we observe that, for single-user WPCNs with $K=1$, all proposed schemes show identical performance since the suboptimal MRT-based scheme coincides with the optimal solution in this case.
We note that due to the saturation of the EH circuits, the achievable harvested power in Fig.~\ref{Fig:Results_Powers} and the uplink information rate in Fig.~\ref{Fig:Results_Rates} are upper bounded.
Furthermore, since ZF is adopted at the BS to mitigate inter-user interference, the upper bounds in Fig.~\ref{Fig:Results_Rates} are lower for larger numbers of users $K$.
We observe that all proposed schemes require a lower transmit power than the baseline schemes for any $K$, $p^\text{req}$, and $R^\text{req}$, respectively.
This is due to the more accurate modelling of the EH circuits which enables the optimization of the instantaneous powers harvested at the user devices, and thus, the optimal design of the downlink energy signal at the BS.
Furthermore, for the maximum achievable harvested powers and uplink information rates, the performance gap between the optimal and the baseline schemes is low since, in this regime, the optimal energy signal and resource allocation policy are determined by the weakest user and thus, a lower number of energy beamforming vectors may be optimal for downlink transmission.
Finally, although the suboptimal SDR-based scheme typically requires lower computational complexity than the optimal scheme, cf. Fig.~\ref{Fig:Results_Complexity}, we observe that the performances achieved by both schemes are practically identical for all considered $p^\text{req}$ and~$R_k^\text{req}$.

\begin{figure}[!t]
	\centering
	\subfigure[Comparison for different numbers of users $K$]{
		\includegraphics[draft=false, width=0.47\textwidth]{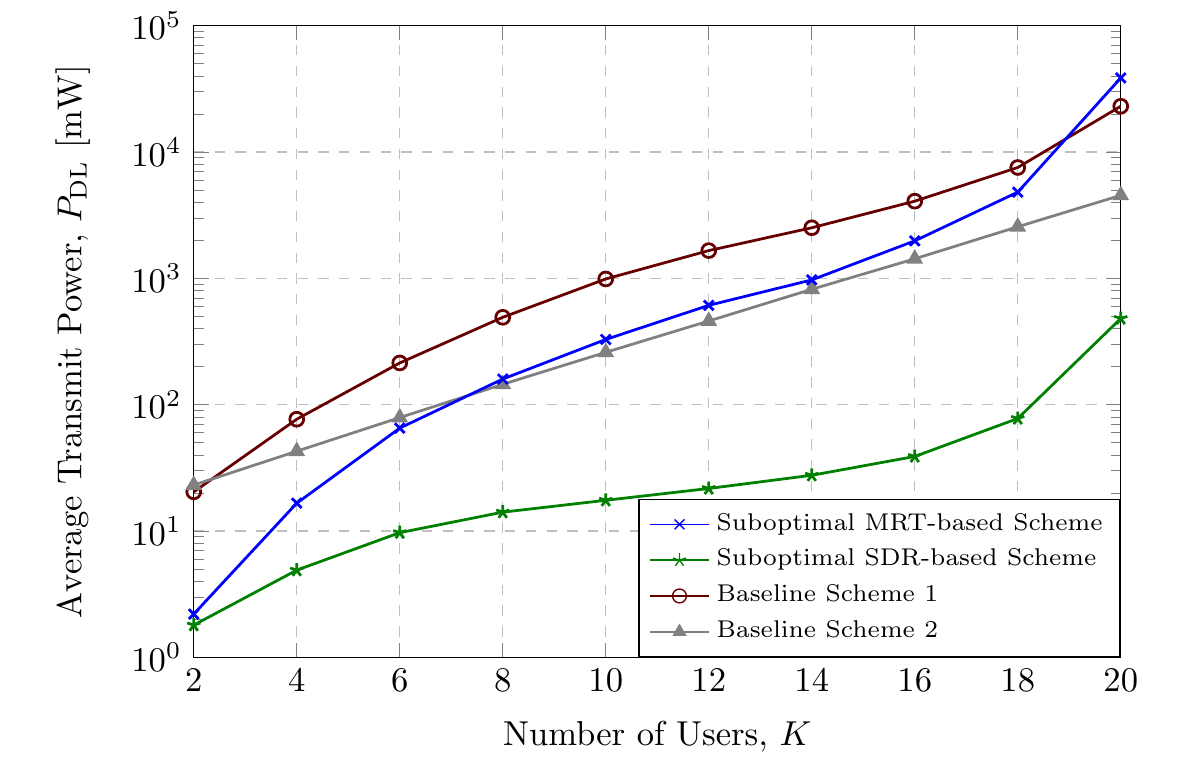} \label{Fig:Results_K}}
	\subfigure[Comparison for different numbers of antennas $N_\text{t}$]{
		\includegraphics[draft=false, width=0.47\textwidth]{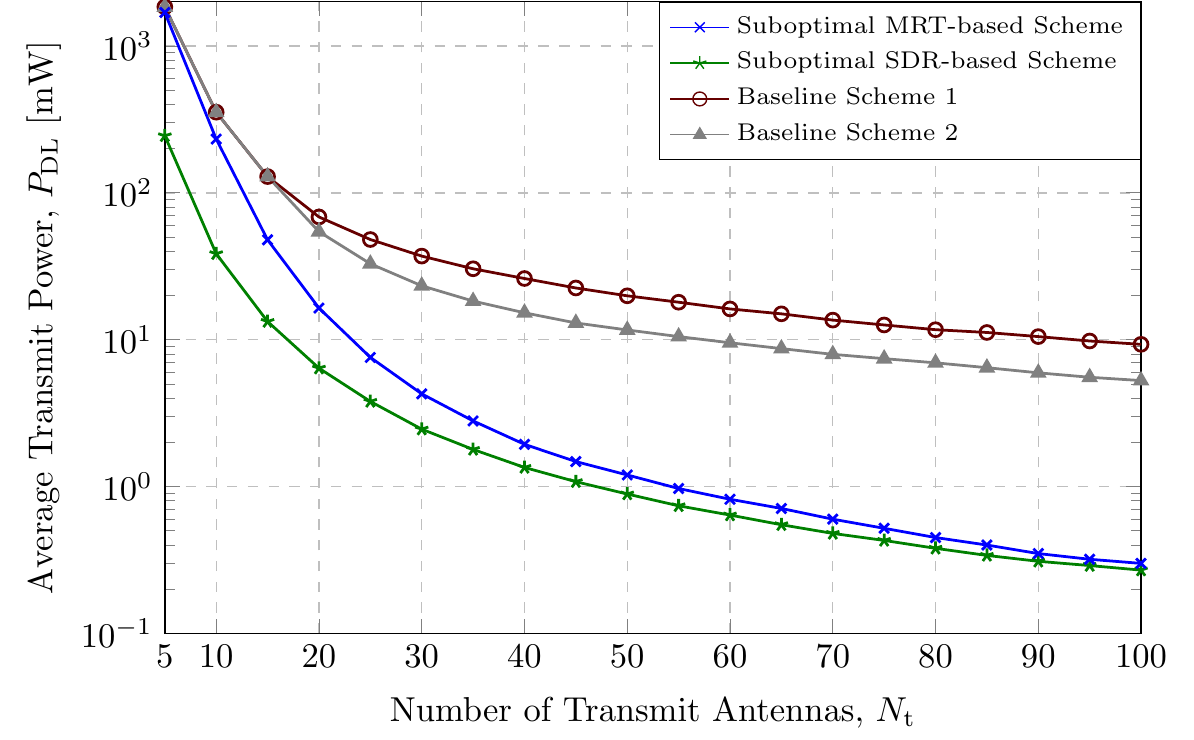}
		\label{Fig:Results_Nt}}
	\caption{Average transmit powers $P_\text{DL}$ for different numbers of BS antennas $N_\text{t}$ and users $K$.}
	\label{Fig:Results_K_Nt}
		\vspace*{-10pt}
\end{figure}
In Figs.~\ref{Fig:Results_K} and \ref{Fig:Results_Nt}, we depict the average transmit powers as functions of the numbers of users $K$ and BS antennas $N_\text{t}$, respectively.
We adopt $p_k^\text{req} = \SI{0}{\watt}$, $R_k^\text{req} = \SI{3}{\frac{\bit}{channel\,use}}, \forall k$, and $d_k$ is taken randomly from the interval $d_k \in [\SI{3}{\meter}, \SI{10}{\meter}], \forall k$.
For the results in Figs.~\ref{Fig:Results_K} and \ref{Fig:Results_Nt}, we adopt $N_\text{t} = 20$ antennas at the BS and $K = 5$ users, respectively.
Since the suboptimal SDR-based scheme has significantly lower computational complexity compared to the optimal scheme for large $K$, cf. Fig.~\ref{Fig:Results_Complexity}, and both schemes achieve identical performance, cf. Fig.~\ref{Fig:Results_PowersRates}, for the results in Fig.~\ref{Fig:Results_K_Nt}, we consider the suboptimal MRT- and SDR-based schemes only.

First, in Figs.~\ref{Fig:Results_K} and \ref{Fig:Results_Nt}, we observe that the required transmit power is smaller for lower numbers of users $K$ and larger numbers of BS antennas $N_\text{t}$, respectively.
This is expected since, for a given $K$, a higher number of transmit antennas, i.e., a smaller ratio $\frac{K}{N_\text{t}}$, leads to a larger beamforming gain and channel hardening, which yields better WPCN performance.
Next, we note that similar to the results in Fig.~\ref{Fig:Results_PowersRates}, the suboptimal SDR-based scheme achieves a significantly better performance compared to the baseline schemes due to the more accurate EH modelling which enables the optimization of the transmit energy signal waveform.
However, we observe that the suboptimal MRT-based scheme shows a poor performance and is even not able to outperform the baseline schemes if the number of deployed users is high.
Interestingly, the performance gap between the proposed suboptimal schemes decreases with increasing system load and for $N_\text{t} \gg K$, both proposed suboptimal schemes achieve nearly the same performance.
In fact, the MRT-based scheme becomes optimal for massive MIMO WPCNs when $N_\text{t} \to \infty$ and orthogonal channel vectors, i.e., $\boldsymbol{h}_i \boldsymbol{h}_j^H \to 0, \forall i \neq j$, and, thus, is more efficient for smaller system loads, i.e., ratios $\frac{K}{N_\text{t}}$.

\section{Conclusions}
In this work, we considered multi-user MISO WPCNs, where, in the downlink, the \gls*{bs} sent an energy signal comprising multiple energy transmit signal vectors to the users, which, in turn, harvested the received power and utilized it for information transmission in the uplink. 
To characterize the instantaneous power harvested at the user devices, we adopted a general non-linear EH model.
Then, we formulated a non-convex optimization problem, where we jointly designed the normalized durations of the downlink and uplink subframes and the energy signal waveform.
We showed that, for a single-user WPCN, the optimal signal in the downlink comprises a single energy signal vector which is collinear with the MRT beamformer and drives the EH circuit into saturation.
Then, to obtain the optimal solution for general multi-user WPCNs, we derived an optimal algorithm, whose computational complexity was exponential and polynomial in the number of users and BS antennas, respectively.
Moreover, we showed that the optimal energy signal for massive MISO WPCNs with vanishing system loads employs a sequence of weighted sums of the MRT beamforming vectors of the users.
Then, based on this solution, we proposed a closed-form MRT-based WPCN design, which is optimal for the massive MISO regime and suboptimal for MISO WPCNs with finite system loads.
Moreover, based on this scheme, we also derived a suboptimal SDR-based design that improved the performance of the MRT-based scheme for the general case.
Our simulation results revealed that the proposed suboptimal SDR-based scheme entails a lower complexity than the optimal design and both schemes achieve nearly identical performances and significantly outperform two baseline schemes based on linear and sigmoidal EH models, respectively.
Moreover, we observed that the performance gap between the proposed suboptimal MRT- and SDR-based schemes is small for lightly loaded WPCNs and becomes negligible when the number of BS antennas tends to infinity. 

\appendices
	% Grab forced line break - \\* - and replace with :
	\renewcommand{\thesection}{\Alph{section}}
	\renewcommand{\thesubsection}{\thesection.\arabic{subsection}}
	\renewcommand{\thesectiondis}[2]{\Alph{section}:}
	\renewcommand{\thesubsectiondis}{\thesection.\arabic{subsection}:}	
	\section{Proof of Proposition \ref{Theorem:SingleUser}}
	\label{Appendix:PropSU}
	First, we note that for any given transmit power, the received power at the user device is maximized if \gls*{mrt} beamforming vector $\boldsymbol{w}^*$ is adopted at the BS, i.e., $\forall \boldsymbol{x}_n \in \mathbb{C}^{N_\text{t}}$, $|\boldsymbol{h} \boldsymbol{x}_n| \leq |\boldsymbol{h} \boldsymbol{w}^* \alpha_n|$, where $\alpha_n = \| \boldsymbol{x}_n \|_2$.
Furthermore, since a scalar phase rotation of $\boldsymbol{x}_n$ does not influence the harvested power at the user device, the optimal energy signal vectors are given by $\boldsymbol{x}_n^* = \boldsymbol{w}^* s_n^*$, where $s_n = \alpha_n^* \exp(j \theta_{n})$ are scalar symbols with arbitrary phases $\theta_{n} \in [0, 2\pi)$.
Here, $\alpha_n^*$ are the optimal symbol magnitudes that solve the following optimization problem:
	\vspace*{-5pt}
\begin{equation}
		\minimize_{\substack{\boldsymbol{\tau}^\text{d}\succeq \boldsymbol{0}, \bar{\tau} \in[0,1], \boldsymbol{\alpha}, N}  } \, \sum_{n=1}^{N} \tau^\text{d}_n \alpha_n^2 \quad
		\subjectto \; \sum_{n=1}^{N} \tau^\text{d}_{n} \phi( \alpha_n^2 \| \boldsymbol{h} \|^2_2) \geq \xi(\bar{\tau}), \quad
		\sum_{n=1}^{N} \tau^\text{d}_{n} = \bar{\tau},
		\vspace*{-5pt}
	\label{Eqn:SUProblem_Ref1}
\end{equation}
\noindent\hspace*{-0pt}with $\boldsymbol{\alpha} = [\alpha_1, \alpha_2, \cdots, \alpha_N]$.

Next, we note that function $\phi(|z|^2)$ in (\ref{Eqn:SUProblem_Ref1}) is convex for $|z| \leq A$ and upper-bounded, i.e., $\phi(|z|^2) = \phi(A^2)$, $\forall |z|\geq A$.
Therefore, for any given value $\bar{\tau}$, the optimal solution\footnotemark\hspace*{0pt} to problem (\ref{Eqn:SUProblem_Ref1}) includes at most $N=2$ energy signal vectors, whose magnitudes are $\alpha_1^* = \alpha_s^*$ and $\alpha_2 = 0$, respectively, with corresponding time lengths $\tau^\text{d}_1 = \frac{\xi(\bar{\tau})}{\phi(A^2)}$ and $\tau^\text{d}_2 = \bar{{\tau}} - \tau^\text{d}_1$ \cite{Morsi2019, Shanin2021a}.
\footnotetext{The corresponding proof is similar to that of \cite[Corollary~3]{Shanin2021a} and is omitted here due to space constraints.}
Thus, problem (\ref{Eqn:SUProblem_Ref1}) can be further simplified as follows:
\vspace*{-5pt}
\begin{equation}
		\minimize_{\bar{\tau}\in[0,1], \tau^\text{d}_1 \geq 0  } \; \tau^\text{d}_1 \quad
		\subjectto \; \tau^\text{d}_{1} \phi( A^2 ) \geq \xi(\bar{\tau}), \; \tau^\text{d}_1 \leq  \bar{\tau}.
		\vspace*{-5pt}
	\label{Eqn:SUProblem_Ref2}
\end{equation}
Since a larger time slot duration $\bar{{\tau}}$ leads to a higher power requirement $(1-\bar{{\tau}})p^\text{u}_\text{min}$ for the uplink transmission phase, function $\xi(\bar{\tau})$ in (\ref{Eqn:SUProblem_Ref2}) is monotonic increasing.
Therefore, if problem (\ref{Eqn:SUProblem_Ref2}) is feasible, the optimal duration of the zero-valued downlink energy signal vector with $\alpha_2 = 0$ is $\tau^\text{d}_2 = 0$.
Thus, the optimal time allocation is achieved with $\tau^\text{d}_1 = \bar{\tau}$ and hence, one energy signal vector $\boldsymbol{x}^*$ is utilized at the BS, i.e., $N^* = 1$.
This concludes the proof.
	\section{Proof of Proposition \ref{Theorem:MuProp1}}
	\label{Appendix:Prop1}
	Let us reformulate optimization problem (\ref{Eqn:RefProblem}) equivalently as follows:
\begin{equation}
		\minimize_{\boldsymbol{\tau}^\text{d}, \bar{\tau}, \boldsymbol{X}, \boldsymbol{M}, N } \; \sum_{n=1}^{N} \tau^\text{d}_{n} \| \boldsymbol{x}_n \|_2^2 \;\;
		\subjectto \;  \sum_{n=1}^{N} \tau^\text{d}_{n} = \bar{\tau}, \; \sum_{n=1}^{N} \tau^\text{d}_{n} \boldsymbol{\mu}_n  \succeq \boldsymbol{\xi}^\text{d}(\bar{\tau}), \; \phi_k\big(|\boldsymbol{h}_k \boldsymbol{x}_n|^2\big) \geq \mu_{n,k}, \forall k,
\label{Eqn:ProblemProofProp1}
\end{equation}
\noindent\hspace*{-0pt}where $\boldsymbol{M} = [\boldsymbol{\mu}_1, \; \boldsymbol{\mu}_2, \; \cdots, \; \boldsymbol{\mu}_N]$.
Since the harvested powers at the user devices are independent of a scalar phase rotation of $\boldsymbol{x}_n, n\in\{1,2,\cdots, N\}$, we decompose the transmit energy signal vectors as in (\ref{Eqn:TransmitSymbolDecomposition}).
Next, we note that for any given vector of harvested powers in time slot $n$, $\boldsymbol{\mu}_n = [\mu_{n,1}, \mu_{n,2}, \cdots, \mu_{n,K}]^\top$, the optimal energy beamforming vector $\boldsymbol{w}_n^*$ that minimizes the instantaneous transmit power $\| \boldsymbol{w}_n \|^2_2$ and yields $\boldsymbol{\mu}_n$ can be obtained as in (\ref{Eqn:OptimalTransmitVectors}) and the corresponding minimum transmit power is given by $\psi(\boldsymbol{\mu}_n)$.
Finally, since for the optimal $\boldsymbol{x}_n^*$ and $\boldsymbol{\mu}_n^*$, we have $\|\boldsymbol{x}_n^*\|_2^2 = \psi(\boldsymbol{\mu}_n^*)$, problem (\ref{Eqn:ProblemProofProp1}) can be equivalently reformulated as (\ref{Eqn:OptimalResourceAlloc}). 
This concludes the proof.

	\section{Proof of Proposition \ref{Theorem:MuProp2}}
	\label{Appendix:Prop2}
	Since $\phi_k(\cdot)$ is monotonic non-decreasing and is given by (\ref{Eqn:EhModel}), optimization problem (\ref{Eqn:FunctionPsi}) can be equivalently reformulated as follows:
\vspace*{-10pt}
\begin{equation}
	\vspace*{-5pt}
	\psi(\boldsymbol{\mu}) = \min_{\boldsymbol{X} \in \Omega_{\mathcal{S}}(\boldsymbol{\mu})} \Tr{\boldsymbol{X}}
	\label{Eqn:FunctionPsiProof2}
\end{equation} 
\noindent with $\Omega_{\mathcal{S}}(\boldsymbol{\mu}) = \{\boldsymbol{X}: \boldsymbol{h}_k \boldsymbol{X} \boldsymbol{h}_k^H \geq \phi_k^{-1}(\mu_k), k\in\{1,2,\cdots,K\}, \forall k, \, \boldsymbol{X}\in\mathcal{S}^{N_\text{t}}_{+}, \, \rank \{\boldsymbol{X} \} \leq 1\}$, where $\boldsymbol{X} = \boldsymbol{x}\boldsymbol{x}^H$.
Furthermore, the matrix $\boldsymbol{X}^*$ that solves the following problem:
\vspace*{-5pt}
\begin{equation}
	\vspace*{-5pt}
		\minimize_{\boldsymbol{X}\in\mathcal{S}^{N_\text{t}}_{+} } \quad \Tr{\boldsymbol{X}} \quad
		\subjectto \; \boldsymbol{h}_k \boldsymbol{X} \boldsymbol{h}_k^H \geq  \phi_k^{-1}(\mu_k), \forall k,
	\label{Eqn:FunctionPsiProof2_1}
\end{equation} 
\noindent satisfies $\rank \{\boldsymbol{X}^*\} \leq 1$, cf. Lemma~\ref{Theorem:Lemma} in Section~\ref{Section:OptimalVectors}.
Hence, the value of function $\psi(\boldsymbol{\mu})$ can be equivalently obtained as $\psi(\boldsymbol{\mu}) = \Tr{\boldsymbol{X}^*}$, where $\boldsymbol{X}^*$ is the solution of problem (\ref{Eqn:FunctionPsiProof2_1}).

Let us consider optimization problem (\ref{Eqn:FunctionPsiProof2_1}).
We note that (\ref{Eqn:FunctionPsiProof2_1}) is a convex problem and, hence, its solution entails a computational complexity that is polynomial in $N_\text{t}$ \cite{Polik2010}.
Since we have $N_\text{t} \geq K$, to reduce the complexity of determining $\psi(\boldsymbol{\mu})$, in the following, we consider the dual problem associated with optimization problem (\ref{Eqn:FunctionPsiProof2_1}) and given by
\vspace*{-4pt}
\begin{equation}
		\vspace*{-4pt}
		\maximize_{\boldsymbol{\lambda} \succeq \boldsymbol{0}} \quad  \boldsymbol{\rho}^\top \boldsymbol{\lambda} \quad
		\subjectto \; \boldsymbol{I} - \sum_{k=1}^{K} \boldsymbol{h}_k^H \boldsymbol{h}_k {\lambda}_k \in\mathcal{S}^{N_\text{t}}_{+},
	\label{Eqn:FunctionPsiProof2_Dual2}
\end{equation} 
\noindent where $\rho_k = \phi_k^{-1}(\mu_k)$ and $\boldsymbol{\lambda} \in \mathbb{R}^K$ is a vector of Lagrangian multipliers associated with the constraints in (\ref{Eqn:FunctionPsiProof2_1}).
Since problem (\ref{Eqn:FunctionPsiProof2_1}) is convex and satisfies Slater's conditions, the duality gap is zero and the solutions $\boldsymbol{X}^*$ and $\boldsymbol{\lambda}^*$ satisfy $\psi(\boldsymbol{\mu}) = \Tr{\boldsymbol{X}^*} = \boldsymbol{\rho}^\top \boldsymbol{\lambda}^*$.

Finally, we consider the constraint in (\ref{Eqn:FunctionPsiProof2_Dual2}).
We note that the constraint is satisfied if and only if all the eigenvalues of matrix $\boldsymbol{I} - \boldsymbol{H}^H \boldsymbol{\Lambda} \boldsymbol{H}$, where $\boldsymbol{\Lambda} = \diag\{\boldsymbol{\lambda}\}$, are non-negative, i.e., $\| \boldsymbol{H}^H \boldsymbol{\Lambda} \boldsymbol{H} \|_2 \leq 1$.
Furthermore, $\| \boldsymbol{H}^H \boldsymbol{\Lambda} \boldsymbol{H} \|_2 = \sqrt{ \lambda_{\text{max}} \big( \boldsymbol{\Sigma} \boldsymbol{U}^H \boldsymbol{\Lambda} \boldsymbol{U} \boldsymbol{\Sigma}^2 \boldsymbol{U}^H \boldsymbol{\Lambda} \boldsymbol{U} \boldsymbol{\Sigma} \big) } =$ \linebreak$\sqrt{ \lambda_{\text{max}} \big( \boldsymbol{U} \boldsymbol{\Sigma}^2 \boldsymbol{U}^H \boldsymbol{\Lambda} \boldsymbol{U} \boldsymbol{\Sigma}^2 \boldsymbol{U}^H \boldsymbol{\Lambda} \big) } = \| \boldsymbol{B} \boldsymbol{\Lambda} \boldsymbol{B} \|_2$, where $\boldsymbol{H} = \boldsymbol{U} \boldsymbol{\Sigma} \boldsymbol{V}^H$ is the singular value decomposition of $\boldsymbol{H}$ and $\lambda_{\text{max}}(\boldsymbol{A})$ denotes the largest eigenvalue of $\boldsymbol{A}$.
Hence, the constraint in (\ref{Eqn:FunctionPsiProof2_Dual2}) is equivalent to $\| \boldsymbol{B} \boldsymbol{\Lambda} \boldsymbol{B} \|_2 \leq 1$.
This concludes the proof.

	\section{Proof of Proposition \ref{Theorem:MuProp3}}
	\label{Appendix:Prop3}
	For any given $\bar{\tau}$, $N$, and set of vectors $\boldsymbol{\mu}_n, n\in\{1,2,\cdots, N\}$, optimization problem (\ref{Eqn:OptimalResourceAlloc}) is linear in $\boldsymbol{\tau}$ and involves $N$ variables and $K+1$ constraints.
We note that the solution of a linear optimization problem with $\tilde{N}$ variables and $\tilde{K} \leq \tilde{N}$ constraints is a vertex of the polytope defined by the $\tilde{K}$ constraints, and thus, the corresponding vector determining the solution of the linear problem has at most $\tilde{K}$ non-zero elements \cite{Polik2010}.
Thus, for the optimal solution of (\ref{Eqn:OptimalResourceAlloc}) and, hence, (\ref{Eqn:RefProblem}) and (\ref{Eqn:OriginalProblem}), at most $N^* = K+1$ time slots have non-zero lengths. 
This concludes the proof.

	\section{Proof of Lemma \ref{Theorem:Lemma}}
	\label{Appendix:LemmaProof}
	The proof below follows along the lines of the proof in \cite[Appendix~C]{Shanin2021b}.
Since problem (\ref{Eqn:LemmaProblem}) is convex and Slater's conditions are satisfied, strong duality holds and the gap between problem (\ref{Eqn:LemmaProblem}) and its dual problem is equal to zero \cite{Boyd2004}.
We express the Lagrangian of (\ref{Eqn:LemmaProblem}) as follows:
\vspace*{-7pt}
\begin{equation}
	\vspace*{-10pt}
	\mathcal{L}(\boldsymbol{X}) =  \Tr{\boldsymbol{X}} - \sum_{k=1}^{K} \gamma_k \boldsymbol{h}_k \boldsymbol{X} \boldsymbol{h}_k^H - \boldsymbol{Y} \boldsymbol{X}+ \bar{\gamma},
	\label{Eqn:RankProofLagrangian}
\end{equation}
\noindent where ${\gamma_k}, k \in \{1,2, \cdots, K\}$, are the Lagrangian multipliers associated with the $K$ constraints of problem (\ref{Eqn:LemmaProblem}) and $\bar{\gamma}$ collects all terms that do not depend on $\boldsymbol{X}$.
Here, $\boldsymbol{Y}$ is the Lagrangian multiplier associated with constraint $\boldsymbol{X} \in \mathcal{S}^{N_\text{t}}_{+}$.
We note that the Karush-Kuhn-Tucker (KKT) conditions are satisfied for the optimal solution $\boldsymbol{X}^*$ of (\ref{Eqn:LemmaProblem}) and the solutions ${\gamma}_k^*, k\in\{1,2,\cdots, K\}$, and $\boldsymbol{Y}^*$ of the corresponding dual problem.
The KKT conditions are given by \cite{luo_chang_2009}
\vspace*{-7pt}
\begin{subequations}
	\begin{align}
		&\triangledown \mathcal{L}(\boldsymbol{X}^*) = \boldsymbol{0}_{N_\text{t}\times N_\text{t} }
		\label{Eqn:RankProofKKTGradient}\\
		&\boldsymbol{Y}^* \succeq \boldsymbol{0}_{N_\text{t}\times N_\text{t} }, \gamma_k^* \geq 0, \forall k
		\label{Eqn:RankProofKKTPositive}\\
		& \boldsymbol{Y}^* \boldsymbol{X}^* = \boldsymbol{0}_{N_\text{t}\times N_\text{t} },
		\label{Eqn:RankProofKKTSemidef}
		\vspace*{-10pt}
	\end{align}
	\label{Eqn:RankProofKKT}
\end{subequations}
\noindent\hspace*{-4pt}where $\boldsymbol{0}_{N_\text{t}\times N_\text{t} }$ stands for the all-zero matrix of size $N_\text{t}\times N_\text{t}$ and $\triangledown \mathcal{L}(\boldsymbol{X}^*)$ denotes the gradient of $\mathcal{L}(\boldsymbol{X})$ evaluated at $\boldsymbol{X}^*$.
Next, we express condition (\ref{Eqn:RankProofKKTGradient}) as follows
\vspace*{-7pt}
\begin{equation}
	\vspace*{-10pt}
	\boldsymbol{Y}^* = \boldsymbol{I}_{N_\text{t}} - \boldsymbol{\Delta}, \label{Eqn:RankProofMainEquality} 
\end{equation}
\noindent where $\boldsymbol{\Delta} = \sum_k \gamma_k^* \boldsymbol{h}_k^H \boldsymbol{h}_k$.
Let us now investigate the structure of $\boldsymbol{\Delta}$.
We denote the maximum eigenvalue of $\boldsymbol{\Delta}$ by $\delta^\text{max} \in \mathbb{R}$.
Due to the randomness of the channel, with probability $1$, only one eigenvalue of $\boldsymbol{\Delta}$ has value\footnotemark\hspace*{0pt} $\delta^\text{max}$ \cite{Xu2019}.
Considering (\ref{Eqn:RankProofMainEquality}), we note that if $\delta^\text{max}<1$, then $\boldsymbol{Y}^*$ is a full-rank positive definite matrix.
In this case, (\ref{Eqn:RankProofKKTSemidef}) yields $\boldsymbol{X}^* = \boldsymbol{0}_{N_\text{t}\times N_\text{t} }$, and hence, $\rank \{\boldsymbol{X}^*\}= 0$.
Furthermore, if $\delta^\text{max}>1$, then $\boldsymbol{Y}^*$ is not a positive semidefinite matrix, which contradicts (\ref{Eqn:RankProofKKTPositive}).
Finally, if $\delta^\text{max} = 1$, then $\boldsymbol{Y}^*$ is a positive semidefinite matrix with $\rank \{\boldsymbol{Y}^*\} = N_\text{t} - 1$.
Then, applying Sylvester's rank inequality to (\ref{Eqn:RankProofKKTSemidef}), we have
\vspace*{-7pt}
\begin{equation}
	\vspace*{-7pt}
	0 = \rank \{\boldsymbol{Y}^* \boldsymbol{X}^*\} \geq \rank \{\boldsymbol{Y}^*\} + \rank \{\boldsymbol{X}^*\} - N_\text{t} = \rank \{\boldsymbol{X}^*\} - 1.
\end{equation}
Thus, we have $\rank \{\boldsymbol{X}^*\} \leq 1$. This concludes the proof.
\footnotetext{In our exhaustive simulations, we have always observed that only one eigenvalue of matrix $\boldsymbol{\Delta}$ has value $\delta^\text{max}$.}

	\section{Proof of Proposition \ref{Theorem:MassiveMIMO}}
	\label{Appendix:Prop5}
	Since $\boldsymbol{h}_i \boldsymbol{h}_j^H = 0, \forall i \neq j$, matrix $\boldsymbol{B}$ in Proposition~\ref{Theorem:MuProp2} is diagonal and the solution of problem (\ref{Eqn:FunctionPsiProposition}) is given by $\lambda_k = \frac{1}{\|\boldsymbol{h}_k \|_2^2}, k\in\{1,2,\cdots, K\}$.
Hence, the corresponding instantaneous transmit power can be obtained as $\psi(\boldsymbol{\mu}) = \sum_{k=1}^{K} \phi_k^{-1}(\mu_k) \frac{1}{\| \boldsymbol{h}_k \|^2_2}$. 
Thus, we note that, for the optimal downlink energy signal, the channel $\boldsymbol{H}$ between the BS and the user devices can be equivalently decomposed into $K$ parallel independent subchannels $\boldsymbol{h}_k, k\in\{1,2,\cdots, K\},$ and the power harvested at user device $k$ in time slot $n$ does not depend on the powers harvested at the other devices.
Furthermore, similarly, the optimal average transmit power in the downlink for energy transmission to user $k$ is independent of the powers transmitted to the other users.
Thus, the optimal average transmit power in (\ref{Eqn:ProblemRef1}) can be expressed as $P^*_\text{DL}(\bar{{\tau}}) = \sum_{n=1}^{\bar{N}} \tau^\text{d*}_{n} \psi(\boldsymbol{\mu}^*_n) = \sum_{k=1}^{K} \sum_{n=1}^{\bar{N}} \bar{\tau}^\text{d*}_{n} \phi_k^{-1}(\mu^*_{n,k}) \frac{1}{\| \boldsymbol{h}_k \|^2_2} = \sum_{k=1}^{K} \frac{1}{\| \boldsymbol{h}_k \|^2_2}  P^*_{\text{DL}, k}(\bar{{\tau}})$, where $P^*_{\text{DL}, k}(\bar{{\tau}}) = \sum_{n=1}^{N_k^*} \tilde{\tau}^\text{d*}_{n,k} \phi_k^{-1}(\mu^*_{n,k})$ is the normalized optimal downlink average transmit power for energy transfer to user $k$.
Here, $\tilde{\tau}^\text{d*}_{n,k}$ and $N_k^*$ are the duration of the time slot, where user $k$ harvests power $\mu^*_{n,k}$, $n \in \{1,2,\cdots, N_k^*\}, k \in \{1,2,\cdots, K\}$,  and the optimal number of time slots for user $k$, respectively.

To determine $N_k^*, \tilde{\tau}^\text{d*}_{n,k}$, and $\mu^*_{n,k}, n \in\{1,2,\cdots, N_k^*\}, k\in\{1,2,\cdots, K\}$, we equivalently decompose problem (\ref{Eqn:OptimalResourceAlloc}) into $K$ subproblems, where subproblem $k, k \in \{1,2,\cdots, K\},$ defines the optimal resource allocation for user $k$, and formulate the subproblem for user $k$ as follows:
	\begin{equation}
		\minimize_{ \tilde{\boldsymbol{\tau}}^\text{d}_{k} \succeq \boldsymbol{0}, \tilde{\boldsymbol{\mu}}_{k}, N_k }\, \sum_{n=1}^{N_k} \tilde{\tau}^\text{d}_{n, k} \phi_k^{-1}({\mu}_{n, k}) \quad
		\subjectto \; \sum_{n=1}^{N_k} \tilde{\tau}^\text{d}_{n,k} {\mu}_{n, k} \geq {\xi}^\text{d}_k(\bar{\tau}), \;
		\sum_{n=1}^{N_k} \tilde{\tau}^\text{d}_{n, k} = \bar{\tau},
	\label{Eqn:OptimalResourceAllocSubproblem}
\end{equation}
\noindent\hspace*{0pt}where $\tilde{\boldsymbol{\tau}}^\text{d}_{k} = [\tilde{\tau}^\text{d}_{1, k}, \tilde{\tau}^\text{d}_{2, k}, \cdots, \tilde{\tau}^\text{d}_{N_k, k}]$ and $\tilde{\boldsymbol{\mu}}_{k} = [{\mu}_{1, k}, {\mu}_{2, k}, \cdots, {\mu}_{N_k, k}]$.
Similar to the single-user WPCN in Proposition~\ref{Theorem:SingleUser}, at most $\bar{N}_k = 2$ time slots are needed for problem (\ref{Eqn:OptimalResourceAllocSubproblem}).
The optimal harvested powers and the durations of the corresponding time slots are given by ${\mu}^*_{1,k} = \phi(A_k^2)$, ${\mu}^*_{2,k} = 0$ and $\tilde{\tau}^\text{d}_{1,k} = \frac{\xi^\text{d}_k(\bar{\tau})}{A_k^2}$, $\tilde{\tau}^\text{d}_{2,k} = \bar{\tau} - \tilde{\tau}^\text{d}_{1,k}$, respectively.
Hence, the optimal average transmit power in (\ref{Eqn:ProblemRef1}) is given by $P^*_\text{DL}(\bar{{\tau}}) = \sum_{k=1}^{K} \frac{\xi^\text{d}_k(\bar{\tau})}{\| \boldsymbol{h}_k \|^2_2}$.
We note that the downlink transmit policy in Proposition~\ref{Theorem:MassiveMIMO} yields the optimal transmit power $P^*_\text{DL}(\bar{{\tau}})$ and satisfies the constraints in (\ref{Eqn:OptimalResourceAllocSubproblem}).
Hence, for a given $\bar{{\tau}}$, the energy beamforming vectors and time slot durations in (\ref{Eqn:SolutionMassiveMimo}) are the solution of (\ref{Eqn:RefProblem}).
This concludes the proof.

\bibliographystyle{IEEEtran}
\bibliography{WPCN_main.bib}

\end{document}